\newcommand{\Par}{\mathop{\bindnasrepma}}
\newcommand{\Tensor}{\otimes}
\newcommand{\LL}{\mathbf{L^{\!\!*}}}
\newcommand{\LC}{\mathbf{L}}
\newcommand{\Lb}{\mathbf{Lb^{\!*}}}
\newcommand{\SL}{\mathop{/}}
\newcommand{\BS}{\mathop{\backslash}}
\newcommand{\Ec}{\mathcal{E}}
\newcommand{\Ac}{\mathcal{A}}
\newcommand{\Gf}{\mathfrak{G}}
\newcommand{\Af}{\mathfrak{A}}
\newcommand{\Sc}{\mathcal{S}}
\newcommand{\poly}{\mathrm{poly}}
\newcommand{\metapar}{\diamond}
\newcommand{\PMod}{\langle\rangle}
\newcommand{\NMod }{[]^{-1}}
\newcommand{\CN}{\mbox{$\mathit{CN}$}}
\newcommand{\nrm}[1]{|\!|#1|\!|}
\newcommand{\ord}{\mathrm{ord}}
\newcommand{\prd}{\mathrm{prod}}
\newcommand{\br}{\mathrm{b}}
\title{A Polynomial-Time Algorithm for the Lambek Calculus with Brackets of Bounded Order}
\titlerunning{A Poly-Time Algorithm for the Lambek Calculus with Brackets of Bounded Order}
\author[1]{Max Kanovich}
\author[2,1]{Stepan Kuznetsov}
\author[3]{Glyn Morrill}
\author[4,1]{Andre Scedrov}
\affil[1]{National Research University Higher School of Economics, Moscow, Russia\\ \texttt{mkanovich@hse.ru}}
\affil[2]{Steklov Mathematical Institute of RAS, Moscow, Russia\\ \texttt{sk@mi.ras.ru}}
\affil[3]{Universitat Politècnica de Catalunya, Barcelona, Spain\\ \texttt{morrill@cs.upc.edu}}
\affil[4]{University of Pennsylvania, Philadelphia, USA\\ \texttt{scedrov@math.upenn.edu}}
\authorrunning{M. Kanovich, S. Kuznetsov, G. Morrill, A. Scedrov}
\subjclass{F.4.2 Grammars and Other Rewriting Systems}
\keywords{Lambek calculus,
proof nets,
Lambek calculus with brackets,
categorial grammar,
polynomial algorithm}
\begin{document}

\maketitle

\begin{abstract}
Lambek calculus is a logical foundation of categorial grammar, a linguistic paradigm of grammar as logic and parsing as deduction. Pentus (2010) gave a polynomial-time algorithm for determining provability of bounded depth formulas in $\LL$, the Lambek calculus with empty antecedents allowed. Pentus' algorithm is based on tabularisation of proof nets. Lambek calculus with brackets is a conservative extension of Lambek calculus with bracket modalities, suitable for the modeling of syntactical domains. In this paper we give an algorithm for provability in $\Lb$, the Lambek calculus with brackets allowing empty antecedents. Our algorithm runs in polynomial time when both the formula depth and the bracket nesting depth are bounded. It
 combines a Pentus-style tabularisation of proof nets with an automata-theoretic treatment of bracketing.
\end{abstract}
 
\section{Introduction}\label{S:intro}

The calculus $\LC$ of Lambek~\cite{Lambek1958} is a logic of strings.
It is retrospectively recognisable as the multiplicative fragment of non-commutative
intuitionistic linear logic without empty antecedents; the calculus $\LL$ is like $\LC$
except that it admits empty antecedents.
The Lambek calculus is the foundation of categorial grammar,
a linguistic paradigm of grammar as logic and parsing as deduction;
see for instance Buszkowski~\cite{Buszkowski}, Carpenter~\cite{carpenter:96}, J\"{a}ger~\cite{Jaeger2005}, 
Morrill~\cite{Morrill2011}, 
Moot and Retor\'{e}~\cite{MootRetore2012}. 
For example, 
the sentence ``{\sl John knows Mary likes Bill}'' can be analysed as grammatical because 
$N, (N \BS S) \SL S, N, (N \BS S) \SL N, N \to S$  is a
theorem of Lambek calculus. Here $N$ stands for {\em noun phrase,} $S$ stands for 
{\em sentence,} and syntactic categories for other words are built from these two
primitive ones using division operations. For example, $(N \BS S) \SL N$ takes
noun phrases on both sides and yields a sentence, thus being the category of
{\em transitive verb.}

Categorial grammar, that started from works of Ajdukiewicz~\cite{ajdukiewicz} and
Bar-Hillel~\cite{bar-hillel:quasi}, aspires to practice linguistics to the standards of mathematical logic;
for example, Lambek~\cite{Lambek1958} proves cut-elimination, that yields
 the subformula property, decidability, the finite reading property,
and the focalisation property.
In a remarkable series of works Mati Pentus has proved the main metatheoretical results
for Lambek calculus:
equivalence to
context free grammars~\cite{pentus:lccf};
completeness w.r.t.\ language models~\cite{Pentus95APAL}\cite{Pentus1998};
NP-completeness~\cite{pentus:npcomplete};
a polynomial-time algorithm for checking provability of formulae of bounded order in $\LL$~\cite{Pentus2010}. 
The Lambek calculus with only one division operation (and without product) is decidable in polynomial
time (Savateev~\cite{savateev:polynomial}).

The Lambek calculus with  brackets $\mathbf{Lb}$ (Morrill 1992~\cite{Morrill1992}; Moortgat 1995~\cite{Moortgat1995})
is a logic of bracketed strings which is a conservative extension of Lambek calculus with bracket
modalities the rules for which are conditioned on metasyntactic brackets. In this paper we consider a variant
of $\mathbf{Lb}$ that allows empty antecedents, denoted by $\Lb$.

The syntax of $\Lb$ is more involved than the syntax of the original Lambek calculus. In $\LC$, the antecedent
(left-hand side) of a sequent is just a linearly ordered sequence of formulae. In $\Lb$, it is a structure called
{\em configuration,} or {\em meta-formula.} Meta-formulae are built from formulae, or {\em types,} as they are called
in categorial grammar, using two metasyntactic constructors: comma and brackets. The succedent (right-hand side) of a sequent
is one type. Types, in turn, are built from variables, or {\em primitive types,} $p_1$, $p_2$, \ldots, using the three
binary connectives of Lambek, $\BS$, $\SL$, and $\cdot$, and two unary ones, $\PMod$ and $\NMod$, that operate brackets.
Axioms of $\Lb$ are $p_i \to p_i$, and the rules are as follows:

$$
\infer[(\BS\to)]{\Delta ( \Pi, A \BS B ) \to C}
{\Pi \to A & \Delta ( B ) \to C}
\qquad
\infer[(\to\BS)]{\Pi \to A \BS B}{A, \Pi \to B}
\qquad
\infer[(\cdot\to)]{\Gamma ( A \cdot B ) \to C}
{\Gamma ( A, B ) \to C}
$$
$$
\infer[(\SL\to)]{\Delta ( B \SL A, \Pi ) \to C}
{\Pi \to A & \Delta ( B ) \to C}
\qquad
\infer[(\to\SL)]{\Pi \to B \SL A}{\Pi, A \to B}
\qquad
\infer[(\to\cdot)]{\Gamma, \Delta \to A \cdot B}{\Gamma \to A & \Delta \to B}
$$
$$
\infer[(\PMod\to)]{\Delta ( \PMod A ) \to C}
{\Delta ( [ A ] ) \to C}
\quad
\infer[(\to\PMod)]{[\Pi] \to \PMod A}{\Pi \to A}
\quad
\infer[(\NMod \to)]{\Delta ( [ \NMod  A ] ) \to C}
{\Delta ( A ) \to C}
\quad
\infer[(\to\NMod )]{\Pi \to \NMod  A}{[\Pi] \to A}
$$

Cut-elimination is proved in Moortgat~\cite{Moortgat1995}.
The Lambek calculus with brackets permits the characterisation of syntactic domains in addition
to word order. 
By way of linguistic example, consider how a relative pronoun type assignment
$(\CN\BS\CN) \SL (S \SL N)$ (here $CN$ is {\em one} primitive type,
corresponding to {\em common noun:} e.g., ``{\sl book}'', as opposed to noun phrase
``{\sl the book}'') allows unbounded relativisation by associative assembly
of the body of relative clauses:

\begin{tabular}[t]{ll}
(a) & {\sl man who Mary likes}\\
(b) & {\sl man who John knows Mary likes}\\
(c) & {\sl man who Mary knows John knows Mary likes} \qquad \dots
\end{tabular}

\noindent
Thus, (b) is generated because the following is a theorem in the pure Lambek calculus:
$$
\CN, (\CN \BS \CN) \SL (S \SL N), N, (N\BS S) \SL S, N, (N\BS S) \SL N \to \CN
$$
Consider also, however, the following example:
{\sl *book which John laughed without reading,}
where * indicates that this example is not grammatical.
In the original Lambek calculus this ungrammatical example is generated, but
in Lambek calculus with brackets its ungrammaticality can be characterised by assigning the adverbial preposition
a type $\NMod((N \BS S) \BS (N \BS S)) \SL (N \BS S)$ blocking this phrase because
the following is not a theorem in Lambek calculus with brackets:
$$
\CN, (\CN \BS \CN) \SL (S \SL N), N, N \BS S, [\NMod((N\BS S)\BS(N\BS S))\SL(N\BS S), (N\BS S)\SL N]\to\CN$$
where the $\NMod$ engenders brackets which block the associative assembly of the body
of the relative clause.
Another example of islands is provided by the ``and'' (``or'') construction:
{\sl *girl whom John loves Mary and Peter loves.} 

J\"ager~\cite{Jaeger2003} claims to prove the context free equivalence of {\bf Lb} grammar on the basis of
a translation from {\bf Lb} to {\bf L} due to Michael Moortgat's student Koen Versmissen~\cite{Versmissen1996}.
However, contrary to Versmissen the translation is not an embedding
translation (Fadda and Morrill~\cite{FaddaMorrill2005}, p.~124). 
We present the counter-example in the end of Section~\ref{S:proofnets}.
Consequently the result
of J\"ager is in doubt: the context free equivalence theorem might be correct,
but the proof of J\"ager, resting on the Versmissen translation, is not correct.

Pentus~\cite{Pentus2010}, following on Aarts~\cite{Aarts}, presents
an algorithm for provability in $\LL$ based on tabularisation (memoisation) of proof nets.
This algorithm runs in polynomial time, if the order of the sequent is bounded.
An algorithm of the same kind was also developed by Fowler~\cite{Fowler2008}\cite{Fowler2009}
for the Lambek calculus without product.
For the unbounded case, the derivability problem for $\LL$ is in the NP class and is NP-complete~\cite{pentus:npcomplete}.
Also, non-associative Lambek calculus~\cite{Lambek1961} can be embedded into the Lambek calculus with brackets
(Kurtonina~\cite{kurtonina:phd}). De Groote~\cite{deGrooteNL}, following on Aarts and Trautwein~\cite{AartsTrautwein},
showed polynomial-time decidability of the non-associative Lambek calculus. The Lambek calculus with bracket modalities, $\Lb$, includes
as subsystems both non-associative and associative Lambek calculi.

In this paper we provide a Pentus-style algorithm
for $\Lb$ provability using (1) the proof nets for Lambek calculus with brackets of 
Fadda and Morrill~\cite{FaddaMorrill2005} which are based on a correction of the Versmissen
translation, and (2) an automata-theoretic argument. Again, for the unbounded case,
$\Lb$ is NP-hard (since it contains $\LL$ as a conservative fragment), and also belongs
to the NP class, since the size of a cut-free derivation in $\Lb$ is linearly bounded by the
size of the goal sequent.

The rest of this paper is organised as follows.
In Section~\ref{S:main} we define complexity parameters and formulate the main result.
Section~\ref{S:proofnets} contains the formulation and proof of a graph-theoretic provability 
criterion for $\Lb$, known as proof nets. 
In Section~\ref{S:params} we introduce some more convenient complexity parameters and show
their polynomial equivalence to the old ones.
Section~\ref{S:algo} is the central one, containing the
description of our algorithm. 
In order to make this paper self-contained, in Section~\ref{S:Pentus} we give a detailed explanation of Pentus' construction~\cite{Pentus2010},
since it is crucial for our algorithm to work. 
Finally, in Section~\ref{S:future} we discuss directions of future research in this field.

\section{The Main Result}\label{S:main}

For a sequent $\Gamma \to C$ we consider the following three {\em complexity parameters.} The first one is the {\em size}
of the sequent, $\nrm{\Gamma \to C}$, counted as the total number of variables and logical symbols in it, including brackets.

\begin{definition}
The size of a formula, meta-formula, or sequent in $\Lb$ is defined recursively as follows:
$\nrm{p_i} = 0$; $\nrm{A \cdot B} = \nrm{A \BS B} = \nrm{B \SL A} = \nrm{A} + \nrm{B} + 1$;
$\nrm{\PMod A} = \nrm{\NMod A} = \nrm{A} + 1$; $\nrm{\Lambda} = 0$;
$\nrm{\Gamma, \Delta} = \nrm{\Gamma} + \nrm{\Delta}$; $\nrm{[\Gamma]} = \nrm{\Gamma} + 2$;
$\nrm{\Gamma \to C} = \nrm{\Gamma} + \nrm{C}$.
\end{definition}
The second parameter is the {\em order.}

\begin{definition}
For any formula $A$ let $\prd(A)$ be $1$ if $A$ is of the form $A_1 \cdot A_2$ or $\PMod A_1$, and
$0$ if not. The order of a formula, meta-formula, or sequent in $\Lb$ is defined recursively:
$\ord(p_i) = 0$; $\ord(A \cdot B) = \max \{ \ord(A), \ord(B) \}$;
$\ord(A \BS B) = \ord(B \SL A) = \max \{ \ord(A) + 1, \ord(B) + \prd(B) \}$;
$\ord(\PMod A) = \ord(A)$;
$\ord(\NMod A) = \max \{ \ord(A) + \prd(A), 1 \}$;
$\ord(\Lambda) = 0$;
$\ord(\Gamma, \Delta) = \max \{ \ord(\Gamma), \ord(\Delta) \}$;
$\ord([\Gamma]) = \ord(\Gamma)$;
$\ord(\Gamma \to C) = \max \{ \ord(\Gamma) + 1, \ord(C) + \prd(C) \}$.
\end{definition}
For sequents without $\cdot$ and $\NMod$, this definition is quite intuitive: the order is the nesting depth of
implications ($\BS$, $\SL$, and finally $\to$) and $\PMod$ modalities. With $\cdot$, we also count {\em alternations}
between divisions and multiplications: for example, in $p_1 \BS (p_2 \cdot (p_3 \BS (p_4 \cdot \ldots p_k) \ldots))$
implications are not nested, but the order grows linearly. On the other hand, the order is always bounded by
a simpler complexity parameter, the maximal height of the syntactic tree. Also, linguistic applications make use
of syntactic types of small, constantly bounded order.

The third parameter is the {\em bracket nesting depth}.
\begin{definition}
The bracket nesting depth of a formula, meta-formula, or a sequent in $\Lb$ is defined recursively as follows:
$\br(p_i) = 0$;
$\br(A \SL B) = \br(B \BS A) = \br(A \cdot B) = \max \{ \br(A), \br(B) \}$;
$\br(\PMod A) = \br(\NMod  A) = \br(A) + 1$;
$\br(\Lambda) = 0$;
$\br(\Gamma, \Delta) = \max \{ \br(\Gamma), \br(\Delta) \}$;
$\br([\Gamma]) = \br(\Gamma) + 1$;
$\br(\Gamma \to C) = \max \{ \br(\Gamma), \br(C) \}$.
\end{definition}

By $\poly(x_1, x_2, \ldots)$ we denote a value that is bounded by a polynomial of $x_1$, $x_2$, \ldots

\begin{theorem}\label{Th:main}
There exists an algorithm that decides whether a sequent $\Gamma \to C$ is derivable in $\Lb$ in 
$\poly(N, 2^R, N^B)$ time, where $N = \nrm{\Gamma \to C}$, $R = \ord(\Gamma \to C)$, 
$B = \br(\Gamma \to C)$.
\end{theorem}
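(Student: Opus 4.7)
The plan is to reduce provability in $\Lb$ to the existence of a proof net via the graph-theoretic criterion of Section~\ref{S:proofnets}, and then to decide existence of such a proof net by dynamic programming that combines Pentus' tabulation for $\LL$ with a bracket-tracking automaton. After fixing the polarised decomposition of $\Gamma \to C$ into literal occurrences, a proof net is an axiom linking that is planar, acyclic, and respects the bracket-modality constraints imposed by $\PMod$ and $\NMod$ links. Using the convenient complexity parameters introduced in Section~\ref{S:params}, which are polynomially equivalent to $N$, $R$, and $B$, this reformulation preserves all three complexity measures up to polynomial factors and sets up the tabulation in the form Pentus requires.

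For the Lambek part I would reuse Pentus' construction from Section~\ref{S:Pentus} essentially verbatim. Its key feature is that, when the order is bounded by $R$, the "interface" between the interior and the exterior of any contiguous subword of the flattened sequent can be described by a label drawn from a set of size $2^{\poly(R)}$. Consequently the tabulation over all pairs of subword endpoints has $\poly(N)\cdot 2^{\poly(R)}$ entries, and each transition (joining two adjacent subwords or closing an axiom link) is computable in polynomial time. On its own this already yields the $\poly(N,2^R)$ bound for bracket-free sequents.

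To incorporate brackets I would run, in parallel with Pentus' tabulation, a finite automaton whose state encodes the stack of currently-open bracket scopes together with, for each scope, the bookkeeping information needed to check the Fadda--Morrill bracket correctness conditions on partial proof nets. Because the bracket nesting depth never exceeds $B$ and the information stored per scope is polynomial in $N$, the automaton has $N^{O(B)}$ states; the Cartesian product of Pentus' tabulation with this automaton gives a joint tabulation of size $\poly(N)\cdot 2^{\poly(R)}\cdot N^{O(B)}$, which can be filled in polynomial time per cell. Provability of $\Gamma \to C$ is then read off as reachability of an accepting joint state corresponding to the whole sequent, yielding the claimed $\poly(N,2^R,N^B)$ running time.

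The main obstacle I expect is establishing the \emph{locality} of the bracket-sensitive correctness conditions: one must show that the extra constraints on a partial proof net arising from $\PMod$/$\NMod$ links can indeed be enforced by such a bounded-stack device. Concretely, when a subword is closed off at the boundary of a bracket scope, all obligations local to it must be finalisable and erasable, so that at no point is a stack of height exceeding $B$ required. The proof-net criterion from Section~\ref{S:proofnets} is precisely what makes this locality true, but extracting from it a clean automaton-theoretic invariant that meshes with Pentus' subword tabulation, and verifying that the product automaton still recognises exactly the provable sequents, is the technically delicate part of the construction.
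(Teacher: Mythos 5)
Your proposal follows essentially the same route as the paper: reduce provability to the proof-net criterion of Section~\ref{S:proofnets}, generate all Pentus-correct proof nets via the $\poly(N,2^R)$-size tabulation (realised as a context-free grammar) of Section~\ref{S:Pentus}, filter with a bounded-stack bracket automaton of $N^{O(B)}$ states, and combine the two by a product construction (grammar--automaton intersection) followed by an emptiness test. The one point where the paper is cleaner than your sketch is that the ``technically delicate'' meshing you anticipate largely disappears: the bracket constraint is isolated as a purely combinatorial \emph{sisterhood} condition on the axiom linking (sister brackets must be linked to sister brackets), which the automaton checks directly on the code word of the linking, independently of proof-net correctness, so the two devices interact only through the standard language-intersection construction.
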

If the depth parameters, $R$ and $B$, are fixed, the working time of the algorithm is polynomial
 w.r.t.~$N$. However, the dependence on the depth parameters is exponential.

\section{Proof Nets}\label{S:proofnets}
In this section we formulate and prove a graph-theoretic criterion for derivability in $\Lb$.
A sequent is derivable if and only if there exists a {\em proof net,} that is, a graph satisfying certain
{\em correctness conditions.} 

For each variable $p_i$ we introduce two {\em literals,} $p_i$ and $\bar{p}_i$, and also
four literals, $[$, $]$, $\bar{[}$, and $\bar{]}$ for brackets.  
Next we define two translations (positive, $A^+$, and negative, $A^-$) of $\Lb$-formulae
into expressions built from literals using two connectives, $\Par$ and $\Tensor$.

\begin{definition}\label{Df:translation}
\begin{align*}
& p_i^+ = p_i,		&& p_i^- = \bar{p_i},\\
& (A \cdot B)^+ = A^+ \Tensor B^+,	&& (A \cdot B)^- = B^- \Par A^-, \\
& (A \BS B)^+ = A^- \Par B^+,		&& (A \BS B)^- = B^- \Tensor A^+,\\
& (B \SL A)^+ = B^+ \Par A^-,		&& (B \SL A)^- = A^+ \Tensor B^-,\\
& (\PMod A)^+ = {]} \Tensor A^+ \Tensor {[}, && (\PMod A)^- = \bar{[} \Par A^- \Par \bar{]},\\
& (\NMod  A)^+ = \bar{]} \Par A^+ \Par \bar{[}; && (\NMod A)^- = {[} \Tensor A^- \Tensor {]}.
\end{align*}
\end{definition}

For meta-formulae, we need only the negative translation. In this translation we use
an extra connective, $\metapar$, which serves as a metasyntactic version of $\Par$
(just as the comma is a metasyntactic product in the sequent calculus for $\Lb$).
\begin{definition}
$(\Gamma, \Delta)^- = \Delta^- \metapar \Gamma^-$; \quad $[\Gamma]^- = \bar{[} \metapar \Gamma^- \metapar \bar{]}$.
\end{definition}
Finally, a sequent $\Gamma \to C$ is translated as ${}\metapar \Gamma^- \metapar C^+$ (or as ${}\metapar C^+$ if $\Gamma$ is empty).

Essentially, this is an extension of Pentus' translation of $\LL$ into cyclic multiplicative linear logic ({\bf CMLL})~\cite{Pentus1998}\cite{Pentus2010}.
In this paper, for the sake of simpicity, we don't introduce an intermediate calculus that extends {\bf CMLL} with brackets,
and we formulate the proof net criterion directly for $\Lb$.

Denote the set of all literal and connective {\em occurrences} in this translation 
by $\Omega_{\Gamma \to C}$. These occurrences are linearly ordered; connectives and literals alternate. The total number
of occurrences is $2n$. Denote
the literal occurrences (in their order) by $\ell_1, \dots, \ell_n$ and the connective occurrences by $c_1, \dots, c_n$.

\begin{definition}\label{Df:dom}
The {\em dominance} relation on the occurrences
of $\Par$ and $\Tensor$, denoted by $\prec$, is defined as follows: for a subexpression of the form $A \Par B$ or $A \Tensor B$ if the occurrence
of the central connective is $c_i$, then for any  $c_j$ inside $A$ or $B$ we declare $c_j \prec c_i$.
\end{definition}

We assume that $\Par$'s that come from translations of bracket modalities associate to the left and $\Tensor$'s associate to
the right (this choice is arbitrary). Thus, in a pair of such $\Par$'s the right one dominates the left one in the 
syntactic tree, and the left one dominates the principal connective of $A$ (if $A$ has one, {\em i.e.,} it is not a literal); symmetrically for $\Tensor$.

The other two relations are the {\em sisterhood} relations on bracket literals and connectives, $\Sc_b$ and $\Sc_c$ respectively.
Both relations are symmetric. 

\begin{definition}
The bracket sisterhood relation, $\Sc_b$, connects pairs of occurrences of $[$ and $]$ or $\bar{[}$ and $\bar{]}$ that
come from the same $\PMod A$, $\NMod A$, or $[\Gamma]$. The connective sisterhood relation, $\Sc_c$, connects pairs of occurrences of $\Tensor$,
$\Par$, or $\metapar$ that come from the same $\PMod A$, $\NMod A$, or $[\Gamma]$. Occurrences connected by one of the
sisterhood relations will be called {\em sister} occurrences.
\end{definition}

\begin{definition}
A {\em proof structure} $\Ec$ is a symmetric relation on the
set of literal occurrences ($\{ \ell_1, \dots, \ell_n \}$) such that each occurrence is connected by $\Ec$ to exactly one occurrence,
and each occurrence of a literal $q$, where $q$ is a variable, $[$, or $]$, is connected to an occurrence of $\bar{q}$.
\end{definition}

\begin{definition}
A proof structure $\Ec$ is {\em planar,} if{f} its edges can be drawn in a semiplane without intersection while the literal
occurrences are located on the border of this semiplane in their order ($\ell_1, \dots, \ell_n$).
\end{definition}

Edges of a planar proof structure divide the upper semiplane into {\em regions.} The number of regions is
$\frac{n}{2} + 1$ (the outermost, infinite region also counts).

\begin{definition}\label{Df:proofnet}
A planar proof structure $\Ec$ is a {\em proof net,} if{f} it satisfies two conditions.
\begin{enumerate}
\item On the border of each region there should be exactly one occurrence of $\Par$ or $\metapar$. 
\item Define an oriented graph $\Ac$  that connects each occurrence of $\Tensor$ to the unique occurrence of $\Par$ or $\metapar$ located
in the same region. The graph $\Ac \cup {\prec}$ should be acyclic.
\end{enumerate}
\end{definition}

By definition, edges of $\Ec$ and $\Ac$ in a proof net do not intersect, in other words, the graph $\Ec \cup \Ac$ is also planar.
We can also consider proof structures and proof nets on expressions that are not translations of $\Lb$ sequents. For example, proof nets allow
{\em cyclic permutations:} a proof net for ${}\metapar \gamma_1 \metapar \gamma_2$ can be transformed into a proof net
for ${}\metapar \gamma_2 \metapar \gamma_1$ (the $\prec$ relation in $\gamma_1$ and $\gamma_2$ is preserved).

\begin{definition}
A proof structure $\Ec$ {\em respects sisterhood,} if{f} the following condition holds: 
if $\langle \ell_i, \ell_{i'} \rangle \in \Ec$, $\langle \ell_i, \ell_j \rangle \in \Sc_b$, and
$\langle \ell_{i'}, \ell_{j'} \rangle \in \Sc_b$, then $\langle \ell_j, \ell_{j'} \rangle \in \Ec$
(i.e., sister brackets are connected to sister brackets).
\end{definition}

Before continuing, we consider an {\bf example} of a proof net for a sequent with linguistic meaning. 
According to~\cite{Morrill2011}, the sentence {\sl ``Mary danced before singing''} gets the following bracketing:
{\sl ``}[{\sl Mary}] {\sl danced }[{\sl before singing}]{\sl ''} and the following type assignment:
$$
[ N ], \PMod N \BS S, [ \NMod(( \PMod N \BS S) \BS (\PMod N \BS S)) \SL (\PMod N \BS S), \PMod N \BS S ] \to S
$$
This sequent is derivable in~$\Lb$, and we justify it by presenting a proof net. This is achieved
by translating the sequent into a string of literals according to Definition~\ref{Df:translation} and drawing
an appropriate $\Ec$ graph that satisfies all the conditions for being a proof net (Definition~\ref{Df:proofnet}):
$$\small
\xymatrix @-10mm{
\metapar & \bar{[} 		\ar@{-} '+/u12mm/'[rrrrrrrrrrrrrrrrrr]+/u12mm/ [rrrrrrrrrrrrrrrrrr]
& \metapar & \bar{S} 	\ar@{-} '+/u10mm/'[rrrrrrrrrrrrrr]+/u10mm/ [rrrrrrrrrrrrrr]
& \Tensor & ]			\ar@{-} '+/u8mm/'[rrrrrrrrrr]+/u8mm/ [rrrrrrrrrr]
& \Tensor & N			\ar@{-} '+/u6mm/'[rrrrrr]+/u6mm/ [rrrrrr]
& \Tensor & [ 			\ar@{-} '+/u4mm/'[rr]+/u4mm/ [rr]
& \metapar & \bar{[} 
& \Par & \bar{N} 
& \Par & \bar{]} 
& \Par & S 
& \Tensor & [ 
& \Tensor & \bar{S} 		\ar@{-} '+/u20mm/'[rrrrrrrrrrrrrrrrrrrrrrrrrrrrrrrrrr]+/u20mm/ [rrrrrrrrrrrrrrrrrrrrrrrrrrrrrrrrrr]
& \Tensor & ]			\ar@{-} '+/u18mm/'[rrrrrrrrrrrrrrrrrrrrrrrrrrrrrr]+/u18mm/ [rrrrrrrrrrrrrrrrrrrrrrrrrrrrrr]
& \Tensor & N			\ar@{-} '+/u16mm/'[rrrrrrrrrrrrrrrrrrrrrrrrrr]+/u16mm/ [rrrrrrrrrrrrrrrrrrrrrrrrrr]
& \Tensor & [			\ar@{-} '+/u14mm/'[rrrrrrrrrrrrrrrrrrrrrr]+/u14mm/ [rrrrrrrrrrrrrrrrrrrrrr]
& \Tensor & \bar{[}		\ar@{-} '+/u12mm/'[rrrrrrrrrrrrrrrrrr]+/u12mm/ [rrrrrrrrrrrrrrrrrr]
& \Par & \bar{N}			\ar@{-} '+/u10mm/'[rrrrrrrrrrrrrr]+/u10mm/ [rrrrrrrrrrrrrr]
& \Par & \bar{]}			\ar@{-} '+/u8mm/'[rrrrrrrrrr]+/u8mm/ [rrrrrrrrrr]
& \Par & S				\ar@{-} '+/u6mm/'[rrrrrr]+/u6mm/ [rrrrrr]
& \Tensor & ]			\ar@{-} '+/u4mm/'[rr]+/u4mm/ [rr]
& \metapar & \bar{]}
& \metapar & \bar{S}
& \Tensor & ]
& \Tensor & N
& \Tensor & [
& \metapar & \bar{[}
& \metapar & \bar{N}
& \metapar & \bar{]}
& \metapar & S
}
$$
Informally speaking, two literals are connected by $\Ec$ if they come from the same axiom
or bracket rule instance. The next figure depicts the $\Ac$ relation obtained from $\Ec$
by Definition~\ref{Df:proofnet} and the dominance relation ${\prec}$ (Definition~\ref{Df:dom}), showing
that $\Ac \cup {\prec}$ is acyclic:
$$\small
\xymatrix @-10mm{
\metapar & \bar{[} 		
& \metapar & \bar{S} 	
& \Tensor \ar `u_r[rrrrrrrr]+/u9mm/`r_d[rrrrrrrrrrrr] [rrrrrrrrrrrr] & ]		
& \Tensor  \ar `u_r[rrrrrr]+/u7mm/`r_d[rrrrrrrr] [rrrrrrrr]   \ar@{>}@`{-/l4mm/+/d5mm/,p+/l4mm/+/d5mm/} [rr] 
& N			
& \Tensor \ar `u_r[]+/u6mm/`r_d[rrrr] [rrrr] 					\ar@{>}@`{+/l1mm/+/d8mm/,p-/l1mm/+/d8mm/} [llll] 
& [ 			
& \metapar & \bar{[} 
& \Par \ar@{>}@`{-/l1mm/+/d8mm/,p+/l1mm/+/d8mm/} [rrrr]  & \bar{N} 
& \Par \ar@{>}@`{+/l3mm/+/d5mm/,p-/l3mm/+/d5mm/} [ll] & \bar{]} 
& \Par  \ar@{>}@`{-/l2mm/+/d5mm/,p+/l2mm/+/d5mm/} [rr]  & S 
& \Tensor \ar `u^l[llllllllllllllll]+/u11mm/`l^d[llllllllllllllll] [llllllllllllllll] & [ 
& \Tensor \ar `u^l[llllllllllllllllllll]+/u15mm/`l^d[llllllllllllllllllll] [llllllllllllllllllll] 
\ar@{>}@`{-/r4mm/+/d15mm/,p-/r2mm/+/d15mm/} [rrrrrrrrrrrrrrrr]
 & \bar{S} 		
& \Tensor \ar `u_r[rrrrrrrrrrrrrrrrrrrrrrrrrrrr]+/u19mm/`r_d[rrrrrrrrrrrrrrrrrrrrrrrrrrrrrrrr] [rrrrrrrrrrrrrrrrrrrrrrrrrrrrrrrr] 
\ar@{>}@`{+/l3mm/+/d10mm/,p+/l1mm/+/d10mm/} [rrrrrr] & ]		
& \Tensor \ar `u_r[rrrrrrrrrrrrrrrrrrrrrrrr]+/u17mm/`r_d[rrrrrrrrrrrrrrrrrrrrrrrrrrrr] [rrrrrrrrrrrrrrrrrrrrrrrrrrrr] 
\ar@{>}@`{-/l4mm/+/d5mm/,p+/l4mm/+/d5mm/} [rr]  & N		
& \Tensor \ar `u_r[rrrrrrrrrrrrrrrrrrrr]+/u15mm/`r_d[rrrrrrrrrrrrrrrrrrrrrrrr] [rrrrrrrrrrrrrrrrrrrrrrrr] \ar@{>}@`{+/l1mm/+/d8mm/,p-/l1mm/+/d8mm/} [llll] 
& [		
& \Tensor \ar `u_r[rrrrrrrrrrrrrrrr]+/u13mm/`r_d[rrrrrrrrrrrrrrrrrrrr] [rrrrrrrrrrrrrrrrrrrr]
\ar@{>}@`{+/r2mm/+/d12mm/,p+/r2mm/+/d12mm/} [llllllll]
 & \bar{[}		
& \Par \ar@{>}@`{-/l1mm/+/d8mm/,p+/l1mm/+/d8mm/} [rrrr] & \bar{N}			
& \Par \ar@{>}@`{+/l3mm/+/d5mm/,p-/l3mm/+/d5mm/} [ll] & \bar{]}			
& \Par \ar@{>}@`{-/l3mm/+/d10mm/,p-/l1mm/+/d10mm/} [llllll] & S				
& \Tensor \ar `u_r[]+/u6mm/`r_d[rrrr] [rrrr] 
\ar@{>}@`{+/r2mm/+/d20mm/,p+/r2mm/+/d20mm/} [llllllllllllllllll]
& ]			
& \metapar & \bar{]}
& \metapar & \bar{S}
& \Tensor  \ar `u^l[lllllll]+/u7mm/`l^d[llllllll] [llllllll] & ]
& \Tensor  \ar `u^l[lllllllllll]+/u9mm/`l^d[llllllllllll] [llllllllllll]  \ar@{>}@`{-/l4mm/+/d5mm/,p+/l4mm/+/d5mm/} [rr]  & N
& \Tensor  \ar `u^l[llllllllllllllll]+/u11mm/`l^d[llllllllllllllll] [llllllllllllllll] \ar@{>}@`{+/l1mm/+/d8mm/,p-/l1mm/+/d8mm/} [llll]  & [
& \metapar & \bar{[}
& \metapar & \bar{N}
& \metapar & \bar{]}
& \metapar & S
}
$$

\vspace*{3.5em}\noindent
Here $\Ac$ and ${\prec}$ are drawn above and below the string respectively.

Finally, as one can see from the figure below, the proof net from our example respects sisterhood:
$$
\small
\xymatrix @-10mm{
\metapar & \bar{[} 		\ar@{-} '+/u12mm/'[rrrrrrrrrrrrrrrrrr]+/u12mm/ [rrrrrrrrrrrrrrrrrr]
		\ar@{-} '+/d12mm/'[rrrrrrrrrrrrrrrrrrrrrrrrrrrrrrrrrrrrrr]+/d12mm/ [rrrrrrrrrrrrrrrrrrrrrrrrrrrrrrrrrrrrrr]
& \metapar & \bar{S} 	
& \Tensor & ]			\ar@{-} '+/u8mm/'[rrrrrrrrrr]+/u8mm/ [rrrrrrrrrr]					\ar@{-} '+/d6mm/'[rrrr]+/d6mm/ [rrrr]
& \Tensor & N			
& \Tensor & [ 			\ar@{-} '+/u4mm/'[rr]+/u4mm/ [rr]
& \metapar & \bar{[} 						\ar@{-} '+/d6mm/'[rrrr]+/d6mm/ [rrrr]
& \Par & \bar{N} 
& \Par & \bar{]} 
& \Par & S 
& \Tensor & [ 			
				\ar@{-} '+/d9mm/'[rrrrrrrrrrrrrrrrrr]+/d9mm/ [rrrrrrrrrrrrrrrrrr]
& \Tensor & \bar{S} 		
& \Tensor & ]			\ar@{-} '+/u18mm/'[rrrrrrrrrrrrrrrrrrrrrrrrrrrrrr]+/u18mm/ [rrrrrrrrrrrrrrrrrrrrrrrrrrrrrr]	\ar@{-} '+/d6mm/'[rrrr]+/d6mm/ [rrrr]
& \Tensor & N			
& \Tensor & [			\ar@{-} '+/u14mm/'[rrrrrrrrrrrrrrrrrrrrrr]+/u14mm/ [rrrrrrrrrrrrrrrrrrrrrr]
& \Tensor & \bar{[}		\ar@{-} '+/u12mm/'[rrrrrrrrrrrrrrrrrr]+/u12mm/ [rrrrrrrrrrrrrrrrrr]		\ar@{-} '+/d6mm/'[rrrr]+/d6mm/ [rrrr]
& \Par & \bar{N}			
& \Par & \bar{]}			\ar@{-} '+/u8mm/'[rrrrrrrrrr]+/u8mm/ [rrrrrrrrrr]
& \Par & S				
& \Tensor & ]			\ar@{-} '+/u4mm/'[rr]+/u4mm/ [rr]
& \metapar & \bar{]}
& \metapar & \bar{S}
& \Tensor & ]				\ar@{-} '+/d6mm/'[rrrr]+/d6mm/ [rrrr]
& \Tensor & N
& \Tensor & [
& \metapar & \bar{[}			\ar@{-} '+/d6mm/'[rrrr]+/d6mm/ [rrrr]
& \metapar & \bar{N}
& \metapar & \bar{]}
& \metapar & S
}
$$
Here $\Ec$ (we keep only edges connected to brackets) and $\Sc_b$ are drawn above and below the string respectively.
Graphically the sisterhood condition means that these edges form 4-cycles.
 In the end of this
section we show that this sisterhood condition is essential: if it fails, the sequent could be
not derivable.

Now we are going to prove that a sequent is derivable in $\Lb$ if and only if there
exists a proof net that respects sisterhood.
First we establish the following technical lemma.

\begin{lemma}\label{Lm:badnegative}
For an expression of the form ${} \metapar A_1^- \metapar \ldots \metapar A_m^-$ (which is not a translation of an $\Lb$
sequent, since there is no $B^+$ at the end) there couldn't exist a proof net.
\end{lemma}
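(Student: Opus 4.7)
The plan is to derive a contradiction from two necessary counting identities that any proof net for $E = {} \metapar A_1^- \metapar \ldots \metapar A_m^-$ must satisfy: one expressing the polarity balance of the axiom linking $\Ec$, the other expressing the region count imposed by condition~1 of Definition~\ref{Df:proofnet}.

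First, I would introduce two integer statistics on formulas. For each $\Lb$-formula $A$, let $h(A)$ denote the number of $\Par$ occurrences in $A^-$ minus the number of $\Tensor$ occurrences in $A^-$, and let $d(A)$ denote the number of positive literal occurrences (unbarred variables together with $[$ and $]$) in $A^-$ minus the number of negative literal occurrences in $A^-$. Analogously define $h^+, d^+$ using $A^+$. A simultaneous induction on the structure of $A$, walking through each clause of Definition~\ref{Df:translation}, yields three identities: $h^+(A) = -h(A)$, $d^+(A) = -d(A)$, and, as a consequence, $h(A) + d(A) = -1$. The inductive case analysis is a routine calculation; the key observation is that the constant offsets in each translation clause cancel, possibly after applying $h^+ = -h$ and $d^+ = -d$ on the positive subcomponents appearing inside the negative translation of $\BS$, $\SL$, or $\NMod$.

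Next, assume for contradiction that $\Ec$ is a proof net for $E$. Since $\Ec$ pairs each literal with one of opposite polarity, summing positives minus negatives across $E$ gives $\sum_{i=1}^m d(A_i) = 0$. Since $\Ec$ is planar, its $n/2$ edges partition the upper semiplane into $n/2 + 1$ regions (where $n = \sum_i \ell(A_i)$ is the total literal count), so condition~1 of Definition~\ref{Df:proofnet} forces the number of $\Par/\metapar$ occurrences in $E$ to equal $n/2 + 1$. Counting directly, that number is $\sum_i c_-(A_i) + m$ (with the $m$ accounting for the leading $\metapar$ together with the $m - 1$ separating ones), and rewriting via $\ell(A_i) = c_-(A_i) + t_-(A_i) + 1$ converts the equality into $\sum_i h(A_i) = 2 - m$.

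Adding the two derived equalities gives $\sum_i (h(A_i) + d(A_i)) = 2 - m$, while the invariant forces the same sum to equal $-m$. Hence $-m = 2 - m$, i.e.\ $0 = 2$, the desired contradiction. The main obstacle is the invariant $h(A) + d(A) = -1$; although the calculation is straightforward, one has to handle both signed statistics simultaneously, as the clauses for $\BS$, $\SL$, and $\NMod$ embed positive subtranslations whose contributions enter with flipped signs. Note that condition~2 of Definition~\ref{Df:proofnet} is never invoked: the lemma already follows from the numerical balance imposed by condition~1 alone.
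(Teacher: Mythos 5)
Your proof is correct and is essentially the paper's own argument in different bookkeeping: the paper uses the single Pentus invariant $\natural(\gamma) = (\text{number of negative literals}) - (\text{number of } \Par\text{'s and } \metapar\text{'s})$, proving $\natural(A^-)=1$, $\natural(A^+)=0$ by the same joint induction and forcing $\natural = -1$ from condition~1 together with the region count $n/2+1$, whence $\natural({}\metapar A_1^-\metapar\ldots\metapar A_m^-)=m-m=0$ gives the contradiction. Your pair $(h,d)$ satisfies $h(A)+d(A)=1-2\,\natural(A^-)$ via the identity (literals) $=$ (connectives) $+\,1$, and your two balance equations recombine into exactly that computation, likewise using only condition~1 of Definition~\ref{Df:proofnet}.
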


\begin{proof}
Following Pentus~\cite{Pentus1998}, for any string $\gamma$ of literals, $\Par$'s, $\Tensor$'s, and $\metapar$'s we
define $\natural(\gamma)$ as the number of negative literals (i.e., of the form $\bar{q}$, where $q$ is a variable or a bracket) minus
the number of $\Par$'s and $\metapar$'s. Then we establish the following:
(1) $\natural(A^+) = 0$ and $\natural(A^-) = 1$ for any formula $A$;
(2) if there exists a proof net for $\gamma$, then $\natural(\gamma) = -1$.
The first statement is proved by joint induction on $A$. The second one follows from the fact that
the number of regions is greater than the number of $\Ec$ links exactly by one; links are in one-to-one correspondence
with negative literal occurrences and each region holds a unique occurrence of $\Par$ or $\metapar$.
Since $\natural({} \metapar A_1^- \metapar \ldots \metapar A_m^-) = m - m = 0 \ne -1$, there is no proof net.
\end{proof}

\begin{theorem}\label{Th:proofnets} 
The sequent $\Gamma \to C$ is derivable in $\Lb$ if and only if there exists a proof net $\Ec$ over $\Omega_{\Gamma \to C}$ that
respects sisterhood.
\end{theorem}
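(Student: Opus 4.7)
The plan is to establish both directions of the equivalence separately. For soundness (``only if''), I would induct on the length of a cut-free derivation of $\Gamma \to C$ in $\Lb$ (cut-elimination is due to Moortgat~\cite{Moortgat1995}), mirroring each inference step by a local surgery on proof nets. For completeness (``if''), I would use a Pentus-style sequentialisation: given a proof net that respects sisterhood, find a connective that can play the role of the outermost rule and recurse on the resulting subnets.

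In more detail, for soundness the base case $p_i \to p_i$ is the trivial net with one $\Ec$-edge between $p_i$ and $\bar{p}_i$. The induction step handles each rule as follows. The rules $(\to\BS)$ and $(\to\SL)$ amount to reclassifying an outermost $\metapar$ between the antecedent and succedent translations as a $\Par$; regions and $\Ac$-edges are unchanged, so correctness is inherited. The rule $(\to\cdot)$ concatenates two nets joined by a fresh $\Tensor$, and acyclicity of $\Ac \cup {\prec}$ is preserved since no new $\Ac$-edge crosses between the two halves. The left rules $(\BS\to)$ and $(\SL\to)$ glue two proof nets at a $\Tensor$ on the displayed pivot $A$; planarity is preserved because the $\Ec$-edges of the two nets sit in disjoint semiplane regions. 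The bracket rules introduce or eliminate pairs of sister bracket literals whose $\Ec$-links tie sisters to sisters by construction, so the sisterhood condition is maintained. The single-$\Par/\metapar$-per-region condition is checked by routine case analysis on each rule.

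For completeness, I would look for a \emph{splitting} connective in a given sisterhood-respecting proof net over $\Omega_{\Gamma\to C}$. By Lemma~\ref{Lm:badnegative} the succedent $C^+$ is genuinely present, so the translation is not purely negative. First I would search for a splitting outermost $\Par$ or $\metapar$ whose removal (possibly after cyclic rotation) leaves a valid sisterhood-respecting proof net on a strictly smaller expression; such a connective corresponds to $(\to\BS)$, $(\to\SL)$, $(\to\NMod)$, or to an associativity shift in the comma/brackets structure of the antecedent. If no splitting $\Par$ or $\metapar$ exists, the acyclicity of $\Ac \cup {\prec}$ forces the existence of a splitting $\Tensor$ whose deletion partitions the net into two independent proof nets; this corresponds to the premises of $(\BS\to)$, $(\SL\to)$, $(\to\cdot)$, $(\to\PMod)$, or $(\NMod\to)$. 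Each subnet inherits planarity, correctness, and sisterhood, and the induction hypothesis yields derivations of the corresponding subsequents which combine into a derivation of $\Gamma\to C$.

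The principal obstacle, I expect, is ensuring that the sisterhood-respecting condition is preserved under splitting. In the pure $\LL$ case one may freely divide a net along any splitting $\Tensor$; here, however, the three-fold $\Tensor$- and $\Par$-chains produced by the translations of $\PMod A$ and $\NMod A$ come bundled with paired bracket literals, and a careless split would produce subnets whose bracket structures fail to form well-formed $\Lb$ meta-formulae. The sisterhood hypothesis on $\Ec$ is precisely what forces those bracket pairs to be connected in a coordinated way, so that any splitting $\Tensor$ either lies outside the bracket chain or cuts it in a manner that keeps sister occurrences together. Verifying this invariant in each splitting case is where the Fadda--Morrill correction of the Versmissen translation does real work, and is precisely the content the counterexample at the end of this section is designed to illustrate. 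Once the invariant is shown to survive splitting in every case, the induction goes through and produces the desired cut-free $\Lb$ derivation.
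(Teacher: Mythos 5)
Your overall strategy coincides with the paper's: the ``only if'' direction is a routine induction on a cut-free derivation, and the ``if'' direction is a sequentialisation that peels off a principal connective and recurses on the subnets. The paper organises the selection of that connective slightly differently --- it takes a maximal element of the acyclic relation $\Ac \cup {\prec}$ restricted to the $\Par$ and $\Tensor$ occurrences, which uniformly yields either a principal $\Par$ (to be replaced by $\metapar$) or a splitting $\Tensor$ --- but your ``first a splitting $\Par$/$\metapar$, else a splitting $\Tensor$'' order amounts to the same case analysis.

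The genuine gap is that you identify the central difficulty --- that a splitting $\Tensor$ must produce fragments whose bracket literals are balanced, i.e.\ fragments that are negative translations of well-formed metaformulae --- and then defer it (``once the invariant is shown to survive splitting in every case, the induction goes through''). This is exactly the step that distinguishes $\Lb$ from $\LL$, and it has a short, specific argument you need to supply: suppose a pair of sister occurrences $\bar{[}$ and $\bar{]}$ were separated by the $\Ac$-link of the splitting $\Tensor$; since $\Ec$-edges cannot cross $\Ac$-edges in a planar net, their $\Ec$-partners are separated as well; by the sisterhood hypothesis those partners form a sister pair $[$, $]$, and such a pair always originates from a single subformula $\PMod A$ or $\NMod A$, so it cannot straddle the split --- contradiction. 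Without this the recursion has no well-formed subsequents to recurse on. Two smaller points: your appeal to Lemma~\ref{Lm:badnegative} (``the succedent is genuinely present'') is not where that lemma is actually needed --- it is needed to exclude the orientation of the $(A \BS B)^-$ split in which the inner fragment would be purely negative; and the $\Tensor$ occurrences coming from $(\PMod A)^+$ and $(\NMod A)^-$ are not genuine two-way splits but bracket removals, which require a separate check (via the one-$\Par$-per-region condition) that the regions adjacent to the bracket literals contain no extra material.
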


\begin{proof}
The direction from $\Lb$-derivation to proof net is routine:  we construct the proof net by induction,
maintaining the correctness criterion. 
For the other direction, we proceed by induction on the number of $\Par$ and $\Tensor$ occurrences. 

If there are no occurrences of $\Par$ or $\Tensor$,
then the total number of $\metapar$ occurrences is, on the one hand, equal to $n$; on the other hand,
it is equal to the number of regions, $\frac{n}{2} + 1$. Therefore, $n = 2$, and 
the only possible proof net is 
$\xymatrix @-9mm{
\metapar & \bar{p} \ar@{-} '+/u4mm/'[rr]+/u4mm/ [rr] & \metapar & p
}$ that corresponds to the $p \to p$ axiom. 

Otherwise consider the set of all occurrences of
$\Par$ and $\Tensor$ with the relation $\Ac \cup {\prec}$. Since this relation is acyclic (and the
set is not empty), there
exists a maximal element, $c_i$. 

{\em Case 1.1:} $c_i$ is a $\Par$ occurrence that came from $(A \cdot B)^- = B^- \Par A^-$. Replacing this $\Par$
by $\metapar$ corresponds to applying $(\cdot\to)$.

{\em Case 1.2:} $c_i$ is a $\Par$ occurrence that came from $(A \BS B)^+ = A^- \Par B^+$.
Replacing this $\Par$ by $\metapar$ changes ${} \metapar \Gamma^- \metapar A^- \Par B^+$ to
${} \metapar \Gamma^- \metapar A^- \metapar B^+$, which corresponds to applying $(\to\BS)$.
(In the negative translation, formulae in the left-hand side appear in the inverse order.)

{\em Case 1.3:} $c_i$ is a $\Par$ occurrence that came from $(B \SL A)^+ = B^+ \Par A^-$.
Again, replace $\Par$ with $\metapar$ and cyclically transform the net, yielding ${} \metapar A^- \metapar \Gamma^- \metapar B^+$. Then apply $(\to\SL)$.

{\em Case 2.1:} $c_i$ is a $\Tensor$ occurrence that came from $(A \cdot B)^+ = A^+ \Tensor B^+$.
Then $\Ac(c_i)$ is a $\metapar$ occurrence, and the $\Ac$ link splits the proof net
for $\Gamma \to A \cdot B$ 
into two separate proof nets for $\Gamma_1 \to A$ and $\Gamma_2 \to B$ ($\Gamma_1$ and/or $\Gamma_2$ could
be empty, then two or three $\metapar$'s shrink into one):
$$
\xymatrix @-7mm{
\metapar & \Gamma_2^- & \metapar & \Gamma_1^- & \metapar & A^+ & \Tensor 
\ar `u^l[llll]+/u7mm/`l^d[llll] [llll]
& B^+
}
$$
Note that here the fragments before the $\metapar$ occurrence $\Ac(c_i)$ and between $\Ac(c_i)$ and $A$
are negative translations of whole metaformulae ($\Gamma_1$ and $\Gamma_2$), not just substrings with
possibly disbalanced brackets. Indeed, suppose that a pair of sister brackets, $\bar{[}$ and $\bar{]}$, is split between these two fragments.
Then, since $\Ec$ links cannot intersect $\Ac$, the corresponding pair of $[$ and $]$, connected to the original pair by $\Ec$, will also be
split and therefore belong to translations of different formulae. However, they also form a sister pair (our proof net respects sisterhood),
and therefore should belong to one formula (by definition of the translation). Contradiction.

Since for the sequents $\Gamma_1 \to A$ and $\Gamma_2 \to B$ the induction parameter is smaller,
they are derivable, and therefore $\Gamma_1, \Gamma_2 \to A \cdot B$ is derivable
by application of the $(\to\cdot)$ rule.

{\em Case 2.2:} $c_i$ is a $\Tensor$ occurrence from $(A \BS B)^- = B^- \Tensor A^+$. 
Again, the proof net gets split:
\begin{center}
\begin{tabular}{cccc}
$\metapar$ & $\xymatrix @-7mm{
\ldots & \metapar & B^- & \Tensor 
\ar `u_r[rrrr]+/u7mm/`r_d[rrrr] [rrrr]
& A^+ & \metapar & \Pi^- & \metapar & \ldots
}$ 
& $\metapar$ & $C^+$\\
& \upbracefill \\
& $\Delta^-$
\end{tabular}
\end{center}
(As in the previous case, no pair of sister brackets could be split by $\Ac$ here, and $\Pi^-$ is a translation of a whole metaformula.)
The outer fragment provides a proof net for $\Delta \langle B \rangle \to C$; applying a cyclic permutation
 to the inner fragment yields a proof net for $\Pi \to A$. The goal sequent, $\Delta \langle \Pi, A \BS B \rangle \to C$,
is obtained by applying the $(\SL\to)$ rule.

The other situation,
\begin{center}
\begin{tabular}{cccc}
$\metapar$ & $\xymatrix @-7mm{
\ldots & \metapar & \Pi^- & \metapar & B^- & \Tensor 
\ar `u^l[llll]+/u7mm/`l^d[llll] [llll]
& A^+  & \metapar & \ldots
}$ 
& $\metapar$ & $C^+$\\
& \upbracefill \\
& $\Delta^-$
\end{tabular}
\end{center}
is impossible by Lemma~\ref{Lm:badnegative}, applied to the inner net.

{\em Case 2.3:}  $c_i$ is a $\Tensor$ occurrence that came from $(B \SL A)^- = A^+ \Tensor B^-$. Symmetric. 

{\em Case 3.1:} $c_i$ is a $\Par$ occurrence that came from $(\NMod  A)^+ = \bar{]} \Par A^+ \Par \bar{[}$. Then 
we replace two $\Par$'s by $\metapar$'s and cyclically relocate the rightmost $\bar{[}$ with its $\Ec$ link, obtaining
${}\metapar \bar{[} \metapar \Gamma^- \metapar \bar{]} \metapar A^+$ from
${}\metapar \Gamma^- \metapar \bar{]} \Par A^+ \Par \bar{[}$.
This corresponds to an application of $(\to\NMod )$. 

{\em Case 3.2:} $c_i$ is a $\Par$ occurrence that came from $(\PMod A)^- = \bar{[} \Par A^- \Par \bar{]}$. 
By replacing $\Par$'s with $\metapar$'s, we change $\PMod A$ into $[A]$. This corresponds to an application of $(\PMod\to)$.

{\em Case 4.1:} $c_i$ is a $\Tensor$ occurrence that came from $(\PMod A)^+ = {]} \Tensor A^+ \Tensor {[}$. Consider the $\Ec$ 
links that go from these $]$ and $[$. Since $(\PMod A)^+$ is the rightmost formula, they both either go to the left or 
into $A^+$. The second situation is impossible, because then $\Ac(c_i)$ should also be a $\Par$ occurrence in $A^+$, that
violates the maximality of $c_i$ (and also the acyclicity condition).

In the first situation, the picture is as follows:
$$
\xymatrix @-8mm{
& \gamma_1 & \metapar & \bar{[} & \metapar & \Gamma^- & \metapar & \bar{]} & \metapar & \gamma_2 & 
] \ar@{-} '+/u6mm/'[lll]+/u6mm/ [lll]
& \Tensor \ar `u^l[lllll]+/u7.7mm/`l^d[lllll] [lllll] 
& A^+ 
& \Tensor \ar `u^l[lllllllll]+/u12mm/`l^d[lllllllll] [lllllllll] & 
[ \ar@{-} '+/u14mm/'[lllllllllll]+/u14mm/ [lllllllllll]
}
$$
(Due to maximality of $c_i$, $\Ac(c_i)$ and its sister are $\metapar$'s, not $\Par$'s.) Clearly,
$\gamma_1$ and $\gamma_2$ are empty: otherwise we have two $\metapar$'s in one region. Then we can remove brackets
and transform this proof net into a proof net for $\metapar \Gamma^- \metapar A^+$, i.e., $\Gamma \to A$. Applying
$(\to\PMod)$  yields $[\Gamma] \to \PMod A$.

{\em Case 4.2:} $c_i$ is a $\Tensor$ occurrence that came from $(\NMod  A)^- = {[} \Tensor A^- \Tensor {]}$. As in the previous case, consider
the $\Ec$ links going from these bracket occurrences. The good situation is when they go to different sides:
\begin{center}
\begin{tabular}{cccc}
$\metapar$ &
$\xymatrix @-8mm{
\ldots & \metapar & \bar{[} & \gamma_1 & \metapar & 
[ \ar@{-} '+/u6mm/'[lll]+/u6mm/ [lll]
& \Tensor & A^- & \Tensor & 
] \ar@{-} '+/u6mm/'[rrr]+/u6mm/ [rrr]
& \metapar & \gamma_2 & \bar{]} & \metapar & \ldots
}
$ & 
$\metapar$ & $C^+$\\
& \upbracefill \\
& $\Delta^-$
\end{tabular}
\end{center}
(The connectives surrounding ${[} \Tensor A^- \Tensor {]}$ are $\metapar$'s due to the maximality of $c_i$.)
Again, $\gamma_1$ and $\gamma_2$ should be empty 
(the connective after $\bar{[}$ or before $\bar{]}$ here cannot be a $\Tensor$, and we get more than one $\Par$ or $\metapar$ in a region),
and removing the bracket corresponds to applying $(\NMod \to)$: replace $[ \NMod  A ]$ with $A$ in the context $\Delta$.

Potentially, the $\Ec$ links from the brackets could also go to one side, but then they end at a pair of sister brackets $\bar{]}$ and
$\bar{[}$ (in this order), which can occur only in $\bar{]} \Par B \Par \bar{[}$. Then one of these $\Par$'s is $\Ac(c_i)$, which 
contradicts the maximality of $c_i$.
\end{proof}

The idea of proof nets as a representation of derivation in a parallel way comes from Girard's original paper on
linear logic~\cite{Girard1987}. For the non-commutative case, including the Lambek calculus, proof nets were
studied by many researchers including Abrusci~\cite{Abrusci1995}, de~Groote~\cite{philippe:CADE-1999}, Nagayama and Okada~\cite{Nagayama2003}, 
Penn~\cite{Penn2002}, Pentus~\cite{Pentus1998}, Yetter~\cite{Yetter1990},
and others.
In our definition of proof nets for $\Lb$ we follow
Fadda and Morrill~\cite{FaddaMorrill2005}, but with the correctness (acyclicity) conditions of Pentus~\cite{Pentus1998}\cite{Pentus2010} 
rather than Danos and Regnier~\cite{DanosRegnier1989}.

The idea of handling brackets similarly to variables is due to Versmissen~\cite{Versmissen1996}. If we take a sequent that is derivable in
$\Lb$, replace brackets with fresh variables, say, $r$ and $s$, and respectively substitute $r \cdot A \cdot s$ for $\PMod A$ and
$r \BS A \SL s$ for $\NMod A$, we obtain a sequent that is derivable in $\LL$  (this follows from our proof net criterion and
can also be shown directly). Versmissen, however, claims that the converse is also true. This would make 
our Theorem~\ref{Th:main} a trivial corollary of Pentus' result~\cite{Pentus2010}, but Fadda and Morrill~\cite{FaddaMorrill2005} present
a counter-example to Versmissen's claim. Namely, the sequent $[ \NMod p ], [ \NMod q ] \to \PMod \NMod (p \cdot q)$ is not derivable in $\Lb$,
but its translation, $r, r \BS p \SL s, s, r, r \BS q \SL s, s \to r \cdot (r \BS (p \cdot q) \SL s) \cdot s$, is derivable in $\LL$.
This example shows the importance of the sisterhood condition in Theorem~\ref{Th:proofnets}: the only possible proof net for this sequent,
shown below, doesn't respect sisterhood.
$$
\xymatrix @-8mm{
\metapar & \bar{[} \ar@{-} '+/u16mm/'[rrrrrrrrrrrrrrrrrrrrrrrrrrrrrr]+/u16mm/ [rrrrrrrrrrrrrrrrrrrrrrrrrrrrrr]
\ar@{-} '+/d8mm/'[rrrrrrrr]+/d8mm/ [rrrrrrrr]
 & \metapar & {[} \ar@{-} '+/u14mm/'[rrrrrrrrrrrrrrrrrrrrrrrrrr]+/u14mm/ [rrrrrrrrrrrrrrrrrrrrrrrrrr]
 \ar@{-} '+/d5mm/'[rrrr]+/d5mm/ [rrrr]
 & \Tensor & \bar{q} \ar@{-} '+/u12mm/'[rrrrrrrrrrrrrrrrrrrrrr]+/u12mm/ [rrrrrrrrrrrrrrrrrrrrrr]
 & \Tensor & {]}  \ar@{-} '+/u6mm/'[rr]+/u6mm/ [rr]
 & \metapar & \bar{]}  &
\metapar & \bar{[}  \ar@{-} '+/u6mm/'[rr]+/u6mm/ [rr] \ar@{-} '+/d8mm/'[rrrrrrrr]+/d8mm/ [rrrrrrrr] & \metapar & {[}
\ar@{-} '+/d5mm/'[rrrr]+/d5mm/ [rrrr] & \Tensor & \bar{p} \ar@{-} '+/u10mm/'[rrrrrrrrrr]+/u10mm/ [rrrrrrrrrr] 
& \Tensor & {]} \ar@{-} '+/u8mm/'[rrrrrr]+/u8mm/ [rrrrrr] & \metapar & \bar{]} \ar@{-} '+/u6mm/'[rr]+/u6mm/ [rr] &
\metapar & {]}  \ar@{-} '+/d8mm/'[rrrrrrrrrr]+/d8mm/ [rrrrrrrrrr] & \Tensor & \bar{]} 
\ar@{-} '+/d5mm/'[rrrrrr]+/d5mm/ [rrrrrr]& \Par & p & \Tensor & q & \Par & \bar{[} & \Tensor & {[}
}
$$

\section{Complexity Parameters for Proof Nets}\label{S:params}

In this section we introduce new complexity parameters that operate with $\Omega_{\Gamma \to C}$ rather than
with the original sequent $\Gamma \to C$, and therefore are more handy for complexity estimations. We show that
a value is polynomial in terms of the old parameters if it is polynomial in terms of the new ones.  

The first parameter, denoted by $n$, is the number of literals in $\Omega_{\Gamma \to C}$. It is connected to
the size of the original sequent by the following inequation:
$
n \le 2\nrm{\Gamma \to C}.
$
(We have to multiply by 2, since a modality, $\PMod$ or $\NMod$, being counted as one symbol in $\Gamma \to C$,
introduces two literals.)

The second parameter, denoted by $d = d(\Omega_{\Gamma \to C})$, is the {\em connective alternation depth,} and
informally it is the maximal number of alternations between $\Par$ and $\Tensor$ on the $\prec$-path from any literal
to the root of the parse tree. Formally it is defined by recursion.

\begin{definition}
For an expression $\gamma$ constructed from literals using $\Par$, $\Tensor$, and $\metapar$, let
$\prd(\gamma)$ be  1 if $\gamma$ is of the form $\gamma_1 \Tensor \gamma_2$, and 0 otherwise.
Define $d(\gamma)$ by recursion:
$d(q) = d(\bar{q}) = 0$ for any literal $q$;
$d(\gamma_1 \Par \gamma_2) = d(\gamma_1 \metapar \gamma_2) = \max \{ d(\gamma_1) + \prd(\gamma_1), d(\gamma_2) + \prd(\gamma_2) \}$;
$d(\gamma_1 \Tensor \gamma_2) = \max \{ d(\gamma_1), d(\gamma_2) \}$.
\end{definition}

\noindent
The $d$ parameter is connected to the order of the original sequent: 

\begin{lemma}
For any sequent $\Gamma \to C$, the following  holds:
$
d(\Omega_{\Gamma \to C}) \le \ord(\Gamma \to C).
$
\end{lemma}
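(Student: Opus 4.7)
The plan is to proceed by structural induction on $A$, maintaining two polarity-indexed invariants, then extend to meta-formulas, and finally assemble the full inequality. For any $\Lb$-formula $A$ I will establish simultaneously:
\textbf{(i)} $d(A^+) \le \ord(A)$, together with the equality $\prd(A^+) = \prd(A)$;
\textbf{(ii)} $d(A^-) + \prd(A^-) \le \ord(A)$.
The ``$+\prd$'' correction on the negative side anticipates how $A^-$ later appears as a $\Par$-operand, where the definition of $d$ introduces exactly this correction; absorbing it into the inductive bound is what makes the induction close.

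Invariant (i) propagates routinely: each clause of Definition~\ref{Df:translation} for $A^+$ parallels the corresponding clause in the definition of $\ord(A)$. For instance, $(A \BS B)^+ = A^- \Par B^+$ gives $d((A \BS B)^+) \le \max\{\ord(A),\,\ord(B) + \prd(B)\}$ using (ii) on the left operand and (i) on the right, and this is bounded by $\ord(A \BS B) = \max\{\ord(A) + 1,\,\ord(B) + \prd(B)\}$. The equality $\prd(A^+) = \prd(A)$ is visible from the outermost connective of $A^+$ in every clause.

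Invariant (ii) is the delicate one, since the negative translation inverts $\Par$ and $\Tensor$. The main case is $A = B \BS C$, where $A^- = C^- \Tensor B^+$ gives $d(A^-) + \prd(A^-) = \max\{d(C^-),\,d(B^+)\} + 1$, which must be matched against $\ord(A) = \max\{\ord(B) + 1,\,\ord(C) + \prd(C)\}$. The $B^+$-contribution is immediate from (i). For the $C^-$-contribution I would do a brief case split: if $\prd(C^-) = 1$, then (ii) gives $d(C^-) \le \ord(C) - 1$, absorbing the extra $+1$; if $\prd(C^-) = 0$ and $\prd(C) = 1$, the summand $\ord(C) + \prd(C)$ in $\ord(A)$ supplies the needed slack; the last subcase $\prd(C^-) = \prd(C) = 0$ forces $C$ to be a variable (by inspecting which clauses of Definition~\ref{Df:translation} can produce $\prd = 0$ on both polarities simultaneously), so the bound is trivial. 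The $\NMod$ clause is handled analogously, exploiting the $\max\{\cdot,\,1\}$ in $\ord(\NMod A)$; the $\PMod$, product, and $\SL$ cases are symmetric or straightforward.

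For meta-formulas, the translation only wraps outer $\metapar$'s and bracket literals $\bar{[},\bar{]}$ around $\Gamma^-$, so a direct induction gives $d(\Gamma^-) + \prd(\Gamma^-) \le \ord(\Gamma)$, with $\prd(\Gamma^-) = 0$ whenever $\Gamma$ is not a single formula. Reading ${}\metapar \Gamma^- \metapar C^+$ as a binary $\metapar$-tree (the leading $\metapar$ joins an empty operand and contributes nothing), one obtains
$$d(\Omega_{\Gamma \to C}) = \max\{d(\Gamma^-) + \prd(\Gamma^-),\; d(C^+) + \prd(C^+)\} \le \max\{\ord(\Gamma),\; \ord(C) + \prd(C)\} \le \ord(\Gamma \to C).$$
The main obstacle is identifying the correct inductive strengthening on the negative side: the weaker invariant $d(A^-) \le \ord(A)$ does not propagate through $\Par$-compositions, so the $+\prd(A^-)$ correction must be built into (ii) from the outset, and the small case analyses above are needed to justify it.
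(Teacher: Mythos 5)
Your proposal is correct and follows essentially the same route as the paper: a simultaneous structural induction establishing $d(A^+) \le \ord(A)$ and the strengthened negative invariant $d(A^-) + \prd(A^-) \le \ord(A)$, with the only delicate cases being the negative translations of $\BS$, $\SL$, and $\NMod$, resolved by a small case split on the relevant subformula (the paper splits on variable versus complex type, which covers the same subcases as your split on the values of $\prd(C^-)$ and $\prd(C)$). Your explicit tracking of $\prd(A^+) = \prd(A)$ and the careful treatment of bracket literals in $\Gamma^-$ are details the paper leaves implicit, but the argument is the same.
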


\begin{proof}
We prove the following two inequations for any formula $A$ by simultaneous induction on the construction of $A$:
$d(A^+) \le \ord(A)$ and $d(A^-) + \prd(A^-) \le \ord(A)$.
The induction is straightforward, the only interesting cases are the negative ones for  $A_2 \BS A_1$,
$A_1 \SL A_2$, and  
$(\NMod A_1)^-$. In those three cases, we need to branch further into two subcases: whether $A_1$ is a variable or a complex
type.
Finally,
$d(\Omega_{\Gamma \to C}) = d({}\metapar A_1^- \metapar \ldots \metapar A_k^- \metapar C^+) =
\max \{ d(A_1^-) + \prd(A_1^-), \ldots, d(A_k^-) + \prd(A_k^-), d(C^+) + \prd(C^+) \} \le 
\max \{ \ord(\Gamma), \ord(C) + \prd(C) \} \le 
\max \{ \ord(\Gamma) + 1, \ord(C) + \prd(C) \} = \ord(\Gamma \to C)$.
\end{proof}

The third parameter is $b$, the maximal nesting depth of pairs of sister brackets. Clearly,
$b = \br(\Gamma \to C)$.

In view of the inequations established in this section, if a value is $\poly(n, 2^d, n^b)$, it is
 also $\poly(\nrm{\Gamma \to C}, 2^{\ord(\Gamma \to C)}, \nrm{\Gamma \to C}^{\br(\Gamma \to C)})$, and for
 our algorithm we'll establish complexity bounds in terms of $n$, $d$, and $b$.

\section{The Algorithm}\label{S:algo}
Our goal is to obtain an efficient algorithm that searches for proof nets that respect sisterhood.
We are going to split this task: first find all possible proof nets satisfying Pentus' correctness
conditions, and then distill out those which respect sisterhood. One cannot, however, simply yield all
the proof nets. The reason is that there exist derivable sequents, even without brackets and of order 2, that
have exponentially many proof nets, for example, $p \SL p, \dots, p \SL p, p, p \BS p, \dots, p \BS p \to p$.
Therefore, instead of generating all the proof nets for a given sequent, Pentus, as a side-effect of his 
provability verification algorithm, produces a context free grammar that generates a set of words encoding all these proof nets. 
We filter this set by intersecting it
with the set of codes of all proof structures that respect sisterhood. For the latter, we
build a finite automaton of polynomial size.

Note that this context free grammar construction is {\em different} from the translation of Lambek
categorial grammars into context free grammars (Pentus~\cite{pentus:lccf}). The grammar from Pentus' algorithm that 
we consider here generates all proof nets for a {\em fixed} sequent, while in~\cite{pentus:lccf} a context free grammar
is generated for {\em all} words that have corresponding derivable Lambek sequents. The latter (global) grammar is of exponential size
(though for the case of only one division there also exists a polynomial construction~\cite{Kuznetsov2016}), while
the former (local) one is polynomial. For the bracket extension, we present a construction of the local grammar. The
context-freeness for the global case is claimed by J\"ager~\cite{Jaeger2003}, but his proof uses the incorrect
lemma by Versmissen (see above). 

Following Pentus~\cite{Pentus2010}, for a given sequent $\Gamma \to C$ we encode proof structures as words of length $n$ over alphabet
$\{ e_1, \dots, e_n \}$. 

\begin{definition}
The code $c(\Ec)$ of proof structure $\Ec$ is constructed as follows: if $\ell_i$ and $\ell_j$ 
are connected by $\Ec$, then the $i$-th letter of $c(\Ec)$ is $e_j$ and the $j$-th letter is $e_i$. 
\end{definition}
The code of a proof structure is always an involutive permutation of $e_1$, \dots, $e_n$.

We are going to define two
languages, $P_1$ and $P_2$, with the following properties:
\begin{enumerate}
\item $P_1 = \{ c(\Ec) \mid \mbox{$\Ec$ is a proof net}\}$;
\item $c(\Ec) \in P_2 \mbox{\quad if{f} \quad} \mbox{$\Ec$ respects sisterhood}$.
\end{enumerate}

Note that in the condition for $P_2$ we say nothing about words that are not of the form $c(\Ec)$. Some of these
words could also belong to $P_2$. Nevertheless, $w \in P_1 \cap P_2$ if{f} $w = c(\Ec)$ 
for some pairing $\Ec$ that is a proof net and respects sisterhood. Therefore, the sequent is derivable in $\Lb$ if{f}
$P_1 \cap P_2 \ne \varnothing$.

Now the algorithm that checks derivability in $\Lb$ works as follows: it constructs a context free grammar for
$P_1 \cap P_2$ and checks whether the language generated by this grammar is non-empty. Notice that the existence of
such a grammar is trivial, since the language is finite. However, it could be of exponential size, 
and we're going to construct a grammar of size $\poly(n, 2^d, n^b)$, and do it in polynomial time.

For $P_1$, we use the construction from~\cite{Pentus2010}. As the complexity measure (size) of a context free grammar, $|\Gf_1|$,
we use the sum of the length of its rules.

\begin{theorem}[M.~Pentus 2010]\label{Th:PentusGrammar}
There exists a context free grammar $\Gf_1$ of size $\poly(n, 2^d)$ that generates $P_1$. Moreover,
this grammar can be obtained from the original sequent by an algorithm with working time
also bounded by $\poly(n, 2^d)$.
\end{theorem}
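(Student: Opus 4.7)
The plan is to follow Pentus' tabularisation strategy from~\cite{Pentus2010}. The starting observation is that planarity of $\Ec$ forces the axiom matching to be a non-crossing pairing of the linearly ordered literals $\ell_1, \ldots, \ell_n$. Hence the language of codes of all \emph{planar} proof structures is already generated by a Dyck-style grammar whose nonterminals $A_{i,j}$ are indexed by intervals $1 \le i \le j \le n$ of even length. Each $A_{i,j}$ is equipped with two families of productions: a ``closing'' rule that pairs $\ell_i$ with some interior $\ell_k$ of opposite polarity (emitting the corresponding code letters and recursing on the enclosed and exterior sub-intervals), and a ``concatenation'' rule that splits $A_{i,j}$ into $A_{i,k}\, A_{k+1,j}$. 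This base grammar has $O(n^3)$ rules of constant length, so planarity alone is polynomial.

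The second step is to filter this grammar by the two correctness conditions of Definition~\ref{Df:proofnet}. The region-count condition is local: whenever a production closes a region by pairing $\ell_i$ and $\ell_k$, one consults precomputed $\prec$-tables to verify that the region just sealed contains exactly one occurrence of $\Par$ or $\metapar$. Acyclicity of $\Ac \cup {\prec}$ is global, and this is the main difficulty. I would address it by refining nonterminals to the form $A_{i,j,\sigma}$, where the ``dominance signature'' $\sigma$ records, for each relevant ancestral connective above the interval, a bit indicating whether it is already a terminus of an $\Ac$-edge whose source lies inside. When two subfragments are concatenated, the grammar then only needs to verify that merging their signatures does not close a cycle in $\Ac \cup {\prec}$, which reduces to a local check.

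The principal obstacle, and the source of the $2^d$ blow-up, is bounding the number of signatures that need to be tracked. The key point is that any would-be cycle in $\Ac \cup {\prec}$ is an alternating walk through $\Par$ and $\Tensor$ nodes, so the bound $d$ on connective alternation depth caps the number of ancestral connectives that can contribute to a relevant signature at $O(d)$. Consequently each signature is a Boolean vector of length $O(d)$, giving $2^{O(d)}$ signatures per interval. The resulting grammar $\Gf_1$ has $O(n^2 \cdot 2^d)$ nonterminals and $O(n \cdot 2^d)$ productions per nonterminal, so its total size is $\poly(n, 2^d)$; since every rule is emitted in constant time from precomputed tables for $\prec$, polarity, and sisterhood, the construction time is of the same order.
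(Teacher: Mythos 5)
There is a genuine gap, and it sits exactly where the theorem is hard. Your ``dominance signature'' $\sigma$ is a Boolean vector of length $O(d)$, recording for each ancestral connective whether it is already the terminus of an $\Ac$-edge from inside the interval. That is not enough information to decide, when two fragments are merged, whether a cycle in $\Ac \cup {\prec}$ is created: a cycle can enter a completed fragment at one boundary-ancestral $\Tensor$ and leave it at another, so the merge test needs to know, for every \emph{pair} $(u,v)$ of dominant $\Tensor$ occurrences above the fragment's endpoints, whether $u$ already reaches $v$ through the fragment. The state attached to a nonterminal must therefore be a (transitive, irreflexive) binary \emph{relation} on a set of up to $3d$ vertices, not a bit-vector on $d$ vertices. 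This is what the paper calls the \emph{profile} of an $(i,j,k)$-segment. Naively there are $2^{\Theta(d^2)}$ such relations, which is \emph{not} $\poly(n,2^d)$; the paper needs a separate counting lemma (Lemma~\ref{Lm:ghost_count}) exploiting transitivity and the chain structure of each $V_i$ (a monotone ``balls between balls'' argument giving $\binom{d_1+d_2}{d_1}<2^{2d}$ per pair of chains, hence $\poly(2^d)$ overall) to get the bound. Your proposal silently replaces the relation by a vector, which makes the count trivial but the cycle check unsound.

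A secondary but real problem is your choice of nonterminals $A_{i,j}$ indexed by intervals with a free concatenation rule $A_{i,j}\to A_{i,k}\,A_{k+1,j}$. Condition 1 of Definition~\ref{Df:proofnet} requires exactly one $\Par$ or $\metapar$ on the border of \emph{every} region, including the outer region of each partial fragment; when two fragments are concatenated their outer regions merge, so the grammar must control how many ``open'' $\Par$/$\metapar$ occurrences each side contributes, and the $\Ac$-edges of outer-region $\Tensor$ occurrences cannot be fixed until that open par is known. Pentus handles this by indexing segments with triples $(i,j,k)$ where $c_j$ is the designated open par of the outer region, and by driving the decomposition from the $\Ec$-link of the extremal literal $\ell_i$ (or $\ell_k$) rather than by arbitrary concatenation. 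Your region check ``when a region is sealed'' covers the interior regions but not the merged outer one. Both defects are repairable, but repairing the first one essentially reconstructs the profile machinery that constitutes the substance of the paper's proof.
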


This theorem is stated as a remark in~\cite{Pentus2010}. We give a full proof of it in Section~\ref{S:Pentus}.

Next, we construct a finite automaton for a language that satisfies the condition for $P_2$.

\begin{lemma}\label{Lm:automaton}
There exists a deterministic finite automaton with $\poly(n,  n^b)$ states that generates a language $P_2$ over
alphabet $\{ e_1, \dots, e_n \}$ such that $c(\Ec) \in P_2$ if{f} $\Ec$ respects sisterhood.
Moreover, this finite automaton can be obtained from the original sequent by an algorithm with
working time $\poly(n, n^b)$.
\end{lemma}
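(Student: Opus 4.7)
The plan is to build a deterministic finite automaton that scans a candidate code $w_1 \cdots w_n$ one letter at a time and, using a bounded-depth stack, verifies the sisterhood condition at the bracket positions. The pivotal observation is that the set of sister pairs forms a non-crossing (properly nested) structure on the literal occurrences: every pair flanks the translation of a formula $A$ (coming from $\PMod A$ or $\NMod A$) or of a metaformula $\Gamma$ (coming from $[\Gamma]$), and since these translations follow the syntactic tree recursively, any two pairs are either disjoint or strictly nested. Crucially, the maximum nesting depth of this structure equals $b = \br(\Gamma \to C)$, so at every point during left-to-right scanning the set of currently \emph{open} sister pairs --- those whose leftmost endpoint has been read but whose rightmost endpoint has not --- has size at most $b$ and behaves like a stack.

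First I would precompute in linear time the sister involution $\sigma$ on bracket-literal positions. Next I would define the automaton: a state is a pair $(k, S)$, where $k \in \{0, 1, \ldots, n\}$ is the current position and $S$ is a stack of at most $b$ entries drawn from $\{1, \ldots, n\}$, recording in opening order, for each still-open sister pair, the position where the $\Ec$-partner of the yet-to-come closing bracket is expected to land. The initial state is $(0, \varnothing)$ and the unique accepting state is $(n, \varnothing)$. From $(k, S)$, on input $e_j$, the automaton moves to $(k{+}1, S')$ as follows: if $\ell_{k+1}$ is not a bracket literal, set $S' = S$; if $\ell_{k+1}$ is a left sister (i.e., $\sigma(k+1) > k+1$), require $\ell_j$ to be a bracket literal and push $\sigma(j)$ onto $S$; if $\ell_{k+1}$ is a right sister, pop the top of $S$ and proceed iff the popped value equals $j$. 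By construction, on any code $c(\Ec)$ the automaton accepts iff for every sister pair $(i, \sigma(i))$ with $w_i = e_{i'}$ and $w_{\sigma(i)} = e_{j'}$ one has $j' = \sigma(i')$, which is exactly the sisterhood-respecting condition.

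For the complexity bound, the number of stacks over $\{1, \ldots, n\}$ of depth at most $b$ is $\sum_{i=0}^{b} n^i = O(n^b)$, so the total state count is $O(n \cdot n^b) = \poly(n, n^b)$; the transition table is then enumerated in time proportional to its size, which is again $\poly(n, n^b)$. The main obstacle is the non-crossing claim together with the corresponding depth bound: once those are established, the stack discipline is automatic and the rest of the argument is routine book-keeping. A minor point is that words not of the form $c(\Ec)$ may be accepted or rejected arbitrarily by the automaton, but this is harmless because the eventual intersection with $P_1$ restricts attention to genuine proof-net codes.
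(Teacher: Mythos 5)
Your proposal is correct and follows essentially the same approach as the paper: a left-to-right scan with a depth-$b$ stack that, at each opening sister bracket, pushes the position of the sister of the $\Ec$-partner and, at each closing sister bracket, pops and compares, relying on the well-nestedness of sister pairs and deferring malformed codes to the intersection with $P_1$. The only cosmetic differences are that you store positions rather than alphabet symbols on the stack and that you spell out the non-crossing/depth claim the paper takes for granted; the state count and time bound are identical.
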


\begin{proof}
First we describe this automaton informally. Its memory is organised as follows: it includes a pointer $i$ to the current letter
of the word (a number from 1 to $n+1$) and a stack that can be filled with letters of $\{ e_1, \dots, e_n \}$. In the beginning,
$i = 1$ and the stack is empty. At each step (while $i \leq n$), the automaton looks at $\ell_i$. 
If it is not a bracket, the automaton increases the pointer and proceeds to the next letter in the word. If it is a bracket, let 
its sister bracket be $\ell_j$.
Denote the $i$-th (currently being read) letter of the word by $e_{i'}$. If $\ell_{i'}$ is not a bracket,
yield ``no'' (bracket is connected to non-bracket). Otherwise let $\ell_{j'}$ be the sister of $\ell_{i'}$
and consider two cases.
\begin{enumerate}
\item $j>i$. Then push $e_{j'}$ on top of the stack, increase the pointer and continue.
\item $j<i$. Then pop the letter from the top of the stack and compare it with $e_{i'}$. If they do not coincide,
yield ``no''. Otherwise increase the pointer and continue.
\end{enumerate}
If $i = n+1$ and the stack is empty, yield ``yes''.

Since sister brackets are well-nested, on the $i$-th step we pop from the stack the symbol that was pushed there on the $j$-th step
(if $\ell_i$ and $\ell_j$ are sister brackets and $j<i$). Thus, the symbol popped from the stack contains exactly the information
that, if the bracket $\ell_j$ is connected to $\ell_{j'}$, then the bracket $\ell_i$ should be connected to the sister bracket
$\ell_{i'}$, and we verify the fact that $\Ec$ satisfies this condition by checking that the $i$-th letter is actually $e_{i'}$.

Note that here we do not check the fact that the word really encodes some proof structure $\Ec$, since malformed codes will be ruled out by
the intersection with $P_1$.

If the bracket nesting depth is $b$, we'll never have more than $b$ symbols on the stack. For each symbol we have $n$ possibilities
($e_1$, \dots, $e_n$). Therefore, the total number of possible states of the stack is $1 + n + n^2 + \ldots + n^b \leq (b+1) \cdot n^b$.
The pointer has $(n+1)$ possible values. Thus, the whole number of possible memory states is $(n+1) \cdot (b+1) \cdot n^b + 1$
(the last ``$+1$'' is for the ``failure'' state, in which the automaton stops to yield ``no'').

Formally, our automaton is a tuple $\Af_2 = \langle Q, \Sigma, \delta, q_0, \{ q_F \} \rangle$, where $\Sigma = \{ e_1, \dots, e_n \}$ is the alphabet,
$Q = \{ 1, \dots, n+1 \} \times \Sigma^{\leq b} \cup \{ \bot \}$, where $\Sigma^{\leq b}$ is the set of all words over $\Sigma$ of length
not greater than $b$, is the set of possible states  ($\bot$ is the ``failure'' state), $q_0 = \langle 1, \varepsilon \rangle$ is the
initial state, $q_F = \langle n+1, \varepsilon \rangle$ is the final (accepting) state, and $\delta \subset Q \times \Sigma \times Q$ is a set
of transitions defined as follows:
\begin{align*}
\delta &= \{ \langle i, \xi \rangle \xrightarrow{e_{i'}} \langle i+1, \xi \rangle  \mid
\mbox{$\ell_i$ is not a bracket} \} \\
&\cup \{ \langle i, \xi \rangle \xrightarrow{e_{i'}} \bot \mid 
\mbox{$\ell_i$ is a bracket and $\ell_{i'}$ is not a bracket} \}\\
&\cup \{ \langle i, \xi \rangle \xrightarrow{e_{i'}} \langle i+1, \xi e_{j'} \rangle \mid
\mbox{$\ell_i$ is a bracket, its sister bracket is $\ell_j$, $j>i$;}\\&\qquad\qquad\mbox{ 
$\ell_{i'}$ is a bracket, its sister bracket is $\ell_{j'}$}\}\\
&\cup \{ \langle i, \xi e_{i'} \rangle \xrightarrow{e_{i'}} \langle i+1, \xi \rangle
\mid \mbox{$\ell_i$ is a bracket, its sister bracket is $\ell_j$, $j<i$} \}\\
&\cup \{ \langle i, \xi e_{i''} \rangle \xrightarrow{e_{i'}} \bot \mid
\mbox{$\ell_i$ is a bracket, its sister bracket is $\ell_j$, $j<i$, and $i' \ne i''$} \}.
\end{align*}
$\Af_2$ is a deterministic finite automaton with not more than $(n+1) \cdot (b+1) \cdot n^b + 1 =
\poly(n, n^b)$ states, and it generates a language $P_2$ such that $c(\Ec) \in P_2$ if{f}
$\Ec$ respects sisterhood.

In the RAM model, each transition is computed in constant time, and the total number of transitions is not more than
$|Q|^2 \cdot n \le ((n+1) \cdot (b+1) \cdot n^b + 1)^2 \cdot n$, which is also $\poly(n, n^b)$.
\end{proof}

Now we combine Theorem~\ref{Th:PentusGrammar} and Lemma~\ref{Lm:automaton} to obtain a context free grammar $\Gf$ for $P_1 \cap P_2$ of 
size $\poly(|\Gf_1|, |\Af_2|, |\Sigma|)$, where $|\Af_2|$ is  the number of states of $\Af_2$. For this we use the following well-known result:
\begin{theorem}\label{Th:CFGintersection}
If a context free grammar $\Gf_1$ defines a language $P_1$ over an alphabet $\Sigma$ and
a deterministic finite automaton $\Af_2$  defines a language $P_2$ over the same alphabet, then there
exists a context free grammar $\Gf$ that defines $P_1 \cap P_2$, the size of this grammar
is $\poly(|\Gf_1|, |\Af_2|, |\Sigma|)$, and, finally, this grammar can be obtained from
$\Gf_1$ and $\Af_2$ by an algorithm with working time also $\poly(|\Gf_1|, |\Af_2|, |\Sigma|)$.
\end{theorem}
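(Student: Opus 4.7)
The plan is to apply the classical Bar-Hillel--Perles--Shamir construction for intersecting a context-free language with a regular one. First I would preprocess $\Gf_1$ so that every production has at most two symbols on the right-hand side; this is a standard binary normalization that can be carried out in time linear in $|\Gf_1|$ by introducing auxiliary nonterminals to split long right-hand sides, and it blows up the grammar by at most a constant factor.

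Let $\Af_2 = \langle Q, \Sigma, \delta, q_0, F \rangle$ and let $S$ be the start symbol and $N$ the nonterminal set of the normalized $\Gf_1$. I would then define $\Gf$ to have nonterminals $\langle q, A, q' \rangle$ for every $A \in N$ and every $q, q' \in Q$, together with a fresh start symbol $S^\star$. The intended meaning of $\langle q, A, q' \rangle$ is ``$A$ derives a string $w$ in $\Gf_1$ such that, starting in state $q$, the automaton $\Af_2$ finishes in state $q'$ after reading $w$.'' For each binary rule $A \to B C$ of $\Gf_1$ and each $q_1, q_2, q_3 \in Q$ one creates the production $\langle q_1, A, q_3 \rangle \to \langle q_1, B, q_2 \rangle \langle q_2, C, q_3 \rangle$; for each rule $A \to a$ with $a \in \Sigma$ and each $q \in Q$ one creates $\langle q, A, \delta(q, a) \rangle \to a$; for each rule $A \to \varepsilon$ one creates $\langle q, A, q \rangle \to \varepsilon$ for every $q \in Q$; and the start is handled by $S^\star \to \langle q_0, S, q_F \rangle$ for each $q_F \in F$.

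Correctness follows from a standard double induction. On the $\Gf$ side, induction on the length of the derivation shows that $\langle q, A, q' \rangle \Rightarrow_{\Gf}^* w$ implies $A \Rightarrow_{\Gf_1}^* w$ and $\delta^*(q, w) = q'$; for the converse, induction on $\Gf_1$ derivations together with the determinism of $\Af_2$ lets one pick the intermediate states uniquely. Consequently $L(\Gf) = L(\Gf_1) \cap L(\Af_2) = P_1 \cap P_2$.

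The complexity bound is the step where some care is needed, and this is the main obstacle I anticipate. After binary normalization, $\Gf_1$ contains $O(|\Gf_1|)$ rules each of constant length, so the construction produces $O(|\Gf_1| \cdot |Q|^3)$ binary productions, $O(|\Gf_1| \cdot |Q|)$ terminal and $\varepsilon$ productions, and $O(|Q|)$ productions for $S^\star$; together these total $\poly(|\Gf_1|, |\Af_2|, |\Sigma|)$. Each production can be emitted in constant time given constant-time access to $\delta$, so the construction runs within the same polynomial bound. Without the binary normalization, a rule with right-hand side of length $\ell$ would contribute $|Q|^{\ell+1}$ new productions, giving an exponential blow-up; that is why the preprocessing step is essential.
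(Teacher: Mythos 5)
Your construction is exactly the classical triple (Bar-Hillel--Perles--Shamir) construction that the paper invokes by citing Ginsburg's Theorem~3.2.1, so the approach is essentially the same as the paper's. The binary normalization you add up front is a sensible (and, for the fully general statement, necessary) refinement to keep the size polynomial, and your correctness and complexity arguments are sound.
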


\noindent
For this theorem we use the construction 
from~\cite[Theorem~3.2.1]{Ginsburg1966} that  works directly with the context free formalism. 
This makes the complexity estimation straightforward.
Since $|\Gf_1| = \poly(n, 2^d)$, $|\Af_2| = \poly(n, n^b)$,
and $|\Sigma| = n$, $|\Gf_2|$ is $\poly(n, 2^d, n^b)$.
Finally, checking derivability of the sequent is equivalent to testing the language $P_1 \cap P_2$ for
non-emptiness, which is done
using the following theorem~\cite[Lemma~1.4.3a and Theorem~4.1.2a]{Ginsburg1966}: 
\begin{theorem}\label{Th:CFGnonemptiness}
There exists an algorithm that checks whether the language generated by a context free grammar $\Gf$ is non-empty,
with $\poly(|\Gf|)$ working time.
\end{theorem}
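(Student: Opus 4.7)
The plan is to reduce the emptiness test to the classical computation of \emph{generating} nonterminals. Let $\Gf = \langle N, \Sigma, P, S\rangle$. I would compute, by a bottom-up fixpoint, the set $U\subseteq N$ of nonterminals that derive at least one terminal string, and then report non-emptiness exactly when $S\in U$.

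The iterative definition is as follows: start with $U:=\varnothing$; then repeatedly pick any production $A\to\alpha$ in $P$ whose right-hand side contains only terminals and nonterminals already in $U$, and add $A$ to $U$; stop when no rule can contribute a new symbol. Correctness in one direction is a straightforward induction on the stage at which a nonterminal joins $U$, showing that it indeed derives a terminal string. For the other direction, I would induct on the height of a minimal derivation tree $T$ for some terminal string from a nonterminal $B$: the children of the root are either leaves (terminals) or roots of shorter derivation trees, so by induction their root nonterminals lie in $U$, and the production at the root then places $B$ in $U$. Applying both directions to $S$ yields $L(\Gf)\ne\varnothing$ iff $S\in U$.

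For complexity, the naive implementation performs at most $|N|$ rounds, each scanning all productions, giving $O(|N|\cdot|\Gf|)$ time, which is already $\poly(|\Gf|)$. A more careful version maintains, for each production, a counter of the number of still-unmarked nonterminals on its right-hand side, decrementing counters when a nonterminal joins $U$ and marking a production complete once its counter hits zero; a worklist of newly marked nonterminals drives the propagation, yielding linear time $O(|\Gf|)$.

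This is a classical result (cited as Ginsburg 1966), so I do not anticipate any real obstacle. The only point requiring minor care is the uniform treatment of terminals and of productions with empty right-hand side: the definition above already handles them, since terminals never block eligibility and an empty right-hand side trivially satisfies the condition, so the corresponding nonterminal is added on the first round.
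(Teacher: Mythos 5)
Your proposal is correct: it is the standard ``generating nonterminals'' fixpoint algorithm, with both directions of correctness and the complexity analysis handled properly (including $\varepsilon$-productions). The paper does not prove this theorem itself but simply cites the classical result in Ginsburg (1966), which is exactly the argument you have spelled out, so there is nothing to reconcile.
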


\noindent The whole algorithm described in this section works in $\poly(n, 2^d, n^b) = 
\poly(\nrm{\Gamma \to C},\linebreak 2^{\ord(\Gamma \to C)}, \nrm{\Gamma \to C}^{\br(\Gamma \to C)})$ time, as required in Theorem~\ref{Th:main}.

\section{Proof of Theorem~\ref{Th:PentusGrammar} (Pentus' Construction Revisited)}\label{S:Pentus}

In our algorithm, described in Section~\ref{S:algo}, we use Pentus' polynomial-size context free grammar,
that generates all proof nets, as a black box: we need only Theorem~\ref{Th:PentusGrammar} itself, not the
details of the construction in its proof. However, Pentus~\cite{Pentus2010} doesn't explicitly formulate this
theorem, but rather gives it as side-effect of the construction for checking {\em existence} of a proof net
(e.g., non-emptiness of the context free language). The latter is, unfortunately, not sufficient for our needs.
Moreover, we use slightly different complexity parameters.
Therefore, and also in order to make our paper logically self-contained, in this section we redisplay Pentus' construction in more detail.
In other words, here we prove Theorem~\ref{Th:PentusGrammar}.

Pentus' idea for seeking proof nets is based on dynamic programming. In $\Omega_{\Gamma \to C}$, connective and literal
occurrences alternate: $c_1, \ell_1, c_2, \ell_2, \ldots, c_n, \ell_n$. Consider triples of the form $(i,j,k)$, where
$1 \le i \le j \le k \le n$.

\begin{definition}
An {\em $(i,j,k)$-segment} $\tilde{\Ec}$ is a planar pairing of literals from $\{ \ell_i, \ldots, \ell_{k-1} \}$
such that in every region created by $\tilde{\Ec}$ there exists a unique $\Par$ or $\metapar$ occurrence from
$\{ c_i, \ldots, c_k \}$ that belongs to this region, and, in particular, this occurrence for the outer (infinite)
region is $c_j$, called the {\em open par.} 
If $k = j = i$, then $\tilde{\Ec}$ is empty, and $c_i$ should be a $\Par$ or $\metapar$ occurrence.
\end{definition}

For each $\tilde{\Ec}$ we construct the corresponding $\tilde{\Ac}$ that connects each $\Tensor$ occurrence to the
only $\Par$ or $\metapar$ in the same region; for the outer region, it uses the open par.

\begin{definition}
An $(i,j,k)$-segment $\tilde{\Ec}$ is {\em correct,} if{f} the graph $\tilde{\Ac} \cup {\prec}$ is acyclic.
\end{definition}

For each $(i,j,k)$-segment, in the non-terminals of the grammar we keep a small amount of information,
which we call the {\em profile} of the segment and that is sufficient to construct bigger segments (and, finally,
the whole proof net) from smaller ones.

\begin{definition}
A $\Tensor$ occurrence is called {\em dominant,} if{f} it is not immediately dominated (in the $\prec$ preorder) by another
$\Tensor$ occurrence. For each $\Tensor$ occurrence $c$ there exists a unique dominant $\Tensor$ occurrence $\tau(c)$ such
that $\tau(c) \succeq c$ and on the $\prec$ path from $c$ to $\tau(c)$ all occurrences are $\Tensor$ occurrences.
\end{definition}

Let's call two $\Tensor$ occurrences {\em equivalent,} $c \approx c'$, if $\tau(c) = \tau(c')$.
Equivalent $\Tensor$ occurrences form {\em clusters;} from each cluster
we pick a unique representative, the $\prec$-maximal occurrence $\tau(c)$. By $\precsim$ we denote the transitive closure
of ${\prec} \cup {\approx}$: $c \precsim c'$ means that there is a path from $c$ to $c'$ that goes 
along $\prec$ and also could go in the inverse direction, but only from $\Tensor$ to $\Tensor$ with no $\Par$ or $\metapar$ in
between.

\begin{lemma}
For an $(i,j,k)$-segment $\tilde{\Ec}$, the graph $\tilde{\Ac} \cup {\prec}$ is acyclic if{f}
any cycle in $\tilde{\Ac} \cup {\precsim}$ is a trivial $\approx$-cycle in a cluster,
and similarly for a proof structre $\Ec$ and the graphs $\Ac \cup {\prec}$ and
$\Ac \cup {\precsim}$.
\end{lemma}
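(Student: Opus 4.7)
The plan is to prove the two directions of the ``iff'' separately, handling the easier one by a direct containment argument and the harder one by a constructive elimination of $\approx$-steps.

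For the $(\Leftarrow)$ direction I argue by contrapositive. Suppose $\tilde{\Ac} \cup \prec$ has a cycle. Since $\prec$ is a strict partial order on connective occurrences (it reflects tree dominance), $\prec$ alone is acyclic, so the cycle must use at least one $\tilde{\Ac}$ edge. Every $\tilde{\Ac}$ edge has a $\Tensor$ source and a $\Par$/$\metapar$ target, and a $\Par$/$\metapar$ belongs to no cluster. Hence the cycle leaves any cluster that contains its $\Tensor$ endpoint, so viewed as a cycle in $\tilde{\Ac} \cup \precsim$ (since $\prec \subseteq \precsim$), it is not confined to a single cluster and is therefore not a trivial $\approx$-cycle.

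For the $(\Rightarrow)$ direction, also by contrapositive, I start from a non-trivial cycle in $\tilde{\Ac} \cup \precsim$ and produce a cycle in $\tilde{\Ac} \cup \prec$. The key structural observation is that each cluster $K$ is $\prec$-convex: if $c,c' \in K$ and $c \prec c'' \prec c'$, then $c''$ lies on the $\prec$-path from $c$ to $\tau(K)$, which consists of only $\Tensor$ occurrences, so $c'' \in K$. Convexity implies (a) every $\prec$-edge that exits $K$ ends above $\tau(K)$, hence dominates every member of $K$, and (b) the SCCs of $\precsim$ are exactly the clusters (plus $\Par$/$\metapar$ singletons); consequently any cycle of $\tilde{\Ac} \cup \precsim$ that uses only $\precsim$-steps is trivial, so a non-trivial cycle must use an $\tilde{\Ac}$-edge. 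I then expand each $\precsim$-edge into a walk of $\prec$ and $\approx$ steps and analyze each maximal sub-walk confined to a single cluster $K$: such a sub-walk is entered via a $\prec$-edge (since $\tilde{\Ac}$ never ends at a $\Tensor$) and exited either by a $\prec$-edge or an $\tilde{\Ac}$-edge. When the exit is a $\prec$-edge to some $b \notin K$, convexity shows $b$ dominates the entry point $c^{\mathrm{in}}$, so the whole $K$-segment may be replaced by a single $\prec$-edge, eliminating the $\approx$-steps inside $K$.

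The main obstacle is the case when the cluster is exited via an $\tilde{\Ac}$-edge, since distinct cluster members may lie in distinct regions and therefore have distinct $\Par$/$\metapar$ targets, so one cannot simply substitute one member for another. I would handle this by passing to the condensation of $\tilde{\Ac}\cup\precsim$ by the SCCs of $\precsim$ (i.e., quotient clusters to single vertices); a non-trivial cycle in $\tilde{\Ac}\cup\precsim$ is the same as a cycle in this condensation. Taking such a cycle of minimal length, I would lift each edge back to the original graph: for each cluster vertex $K$ appearing in the quotient cycle, one must pick a single $c \in K$ witnessing both the incoming $\hat{\prec}$-edge ($p_{\mathrm{in}} \prec c$) and the outgoing $\hat{\tilde\Ac}$-edge ($c \,\tilde{\Ac}\, p_{\mathrm{out}}$); the crucial point is that if no such common witness existed the cycle could be shortened in the condensation, contradicting minimality. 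This yields a cycle in $\tilde{\Ac}\cup\prec$ without any $\approx$-steps.

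Finally, the ``similarly'' clause for a proof structure $\Ec$ and $\Ac \cup \prec$ versus $\Ac \cup \precsim$ follows by the very same argument, because the definitions of $\prec$, $\approx$, $\precsim$, clusters, and the region-based graph $\Ac$ apply uniformly to the whole proof structure as to an $(i,j,k)$-segment; no segment-specific property was used.
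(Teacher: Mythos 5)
Your easy direction and your structural preliminaries are all sound: $\prec$ alone is acyclic, every $\tilde{\Ac}$-edge ends at a $\Par$/$\metapar$ which lies in no cluster, clusters are $\prec$-convex, and the strongly connected components of $\precsim$ are exactly the clusters together with the $\Par$/$\metapar$ singletons. The gap is precisely at the case you flag yourself, and your proposed repair does not close it. Taking a condensation cycle of minimal \emph{length} and claiming that the absence of a common witness in a cluster $K$ would let you shorten the cycle is unjustified: shortening would require an edge of the condensation that bypasses $K$, i.e.\ an edge from the predecessor component directly to $p_{\mathrm{out}}$ or beyond, and nothing produces such an edge. Each $\Tensor$ occurrence has exactly one $\tilde{\Ac}$-target (the unique $\Par$ or $\metapar$ of its region), and the members of $K$ that dominate the entry witness form a chain whose $\tilde{\Ac}$-targets may all differ from $p_{\mathrm{out}}$; minimality of length yields no contradiction. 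In fact, at the level of abstract directed graphs (a tree order $\prec$ plus an arbitrary $\Tensor$-to-$\Par$ map in place of the region-based $\tilde{\Ac}$) the statement you are trying to prove is false: take a cluster $\{\tau, c_1, c_2\}$ with $c_1, c_2$ incomparable and let $\tilde{\Ac}(c_1)=p_1$ with $p_1 \prec c_2$; then $c_1 \to p_1 \to c_2 \to c_1$ (the last step an $\approx$-step) is a non-trivial cycle in $\tilde{\Ac}\cup{\precsim}$ even though $\tilde{\Ac}\cup{\prec}$ may be acyclic. So no purely order-theoretic argument of the kind you sketch can succeed.

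What rescues the lemma, and what your proof never invokes, is planarity: $\tilde{\Ec}$, $\tilde{\Ac}$ and the parse tree are drawn in the plane without crossings, so a simple cycle encloses a well-defined area. The paper's (Pentus') argument takes a simple cycle in $\tilde{\Ac}\cup{\precsim}$ enclosing \emph{minimal area}; if it contains a downward step from $c$ to $d$ with $c\approx d$ and $c\succ d$, then the $\tilde{\Ac}$-link leaving $c$ must point into the enclosed region (it cannot cross the cycle), and following it yields a cycle of strictly smaller area, a contradiction. Hence the minimal cycle has no downward $\approx$-steps and is already a cycle in $\tilde{\Ac}\cup{\prec}$. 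To complete your proof you must replace ``minimal length in the condensation'' by ``minimal enclosed area in the planar drawing'' (or some equivalent topological input); the configuration that defeats your common-witness claim is exactly the one that planarity excludes.
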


\begin{proof}
Pentus proves this lemma by a topological argument. If $\tilde{\Ac} \cup {\precsim}$ has non-trivial cycles, 
take a simple cycle (i.e. a cycle where no vertex appears twice) that embraces the smallest area. If this cycle
includes a link from $c$ to $d$ where $c \approx d$ and $c \succ d$, then consider the $\tilde{\Ac}$ link that goes
from $c$. This link should go {\em inside} the cycle, and, continuing by this link, one could construct a new cycle with
a smaller area embraced. Contradiction. The other direction is trivial, since every cycle in $\tilde{\Ac} \cup {\prec}$ is
a non-trivial cycle in $\tilde{\Ac} \cup {\precsim}$.
\end{proof}

\noindent
In view of this lemma we can now use $\precsim$ instead of $\prec$ in the correctness (acyclicity) criteria for
proof nets and $(i,j,k)$-segments.

\begin{definition}
For a connective occurrence $c_i$ let $V_i$ be the set of all dominant $\Tensor$ occurrences on the $\prec$ path from 
$c_i$ to the root of the parse tree.
\end{definition}

\noindent
Since each dominant $\Tensor$ marks a point of alternation between $\Tensor$ and $\Par$ (or $\metapar$, on the top level),
and the number of such alternations is bounded by $d$, we have $|V_i| \le d$ for any $i$.

\begin{definition}
The profile of an $(i,j,k)$-segment $\tilde{\Ec}$, denoted by $R$, is the restriction of the transitive closure of $\tilde{\Ac} \cup {\precsim}$ to
the set $V_i \cup V_j \cup V_k$ that is forced to be irreflexive (in other words, we remove trivial $\approx$-cycles). 
An {\em $(i,j,k)$-profile} is an arbitrary transitive irreflexive relation on $V_i \cup V_j \cup V_k$.
\end{definition}

\begin{lemma}\label{Lm:ghost_count}
The number of different $(i,j,k)$-profiles is $\poly(2^d)$.
\end{lemma}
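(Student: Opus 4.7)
The plan is a direct counting argument built on the bound $|V_i|\le d$ asserted in the sentence just before the lemma. First I would note that the ground set has size $|V_i\cup V_j\cup V_k|\le 3d$, simply because each summand has at most $d$ elements; the bound $|V_r|\le d$ itself unrolls the recursive definition of $d(\cdot)$, since each dominant $\Tensor$ on the $\prec$-path from $c_r$ up to the root records one alternation from a surrounding $\Par/\metapar$-region into a $\Tensor$-region, and such alternations are precisely what $d$ counts.

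Once the ground set is bounded, the lemma follows from the definition of profile: an $(i,j,k)$-profile is \emph{any} transitive irreflexive binary relation on $V_i\cup V_j\cup V_k$, hence a subset of the at most $3d(3d-1)$ ordered pairs of distinct elements. Therefore the number of $(i,j,k)$-profiles is at most $2^{3d(3d-1)}$, a quantity depending only on $d$ and fitting within the $\poly(2^d)$ envelope used for the grammar-size estimate in Theorem~\ref{Th:PentusGrammar}. Plugging this into the grammar, whose nonterminals are indexed by $(i,j,k)$ together with a profile, gives $\poly(n,2^d)$ nonterminals, as required downstream.

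There is essentially no obstacle: transitivity is not even exploited in the counting (it only narrows the class further), nor is the fact that each $V_r$ is a chain under $\prec$, nor the fact that the three paths from $c_i,c_j,c_k$ to the root of the parse tree share a common suffix, which in fact makes the union typically much smaller than $3d$. Those structural refinements could be invoked if one wanted a sharper constant in the exponent, but the crude $2^{m^2}$ bound for an $m$-element set with $m\le 3d$ already suffices for the complexity analysis in Section~\ref{S:algo}.
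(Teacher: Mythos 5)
Your counting step contains a genuine quantitative error: the bound $2^{3d(3d-1)}$ is \emph{not} $\poly(2^d)$. By the paper's convention, $\poly(2^d)$ means bounded by a fixed polynomial of $2^d$, i.e.\ by $C\cdot(2^d)^k=C\cdot 2^{kd}$ for constants $C,k$; your bound is $2^{\Theta(d^2)}=(2^d)^{\Theta(d)}$, whose exponent grows with $d$. The structural facts you explicitly set aside as unnecessary --- transitivity and the fact that each $V_r$ is a chain under $\prec$ --- are precisely what brings the count down from $2^{\Theta(d^2)}$ to $2^{O(d)}$. Restricting to transitive irreflexive relations alone does not save you either: the number of strict partial orders on $m$ elements is still $2^{\Theta(m^2)}$. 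If the lemma only gave $2^{O(d^2)}$, Theorem~\ref{Th:PentusGrammar} would yield a grammar of size $\poly(n,2^{d^2})$ rather than $\poly(n,2^d)$, and the running time in Theorem~\ref{Th:main} would degrade accordingly, so the loss is not harmless for the stated results.

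The paper's proof is exactly the refinement you skipped. Since the elements of each $V_r$ lie on a single $\prec$-path to the root, any profile arising from a segment restricts on $V_r$ to the fixed chain $Q_r$; the only freedom is in the cross-connections between the three chains (the paper implicitly counts only such profiles). By transitivity, if a vertex of $V_i$ is related to some vertex of $V_j$ then it is related to all $Q_j$-greater vertices, so the relation from $V_i$ into $V_j$ is a monotone ``staircase'' encoded by an interleaving of $d_1$ black balls among $d_2$ white balls, of which there are $\binom{d_1+d_2}{d_1}<2^{2d}$. Doing this for all six ordered pairs of chains gives at most $(2^{2d})^6=(2^d)^{12}$ profiles, which is genuinely $\poly(2^d)$. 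Your observation that $|V_i\cup V_j\cup V_k|\le 3d$ is correct and is used, but on its own it cannot give the required bound.
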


\begin{proof}
Let $|V_i| = d_1$, $|V_j| = d_2$, $|V_k| = d_3$ (these three numbers are not greater than $d$).
Each profile includes three chains, $Q_i$, $Q_j$, and $Q_k$, and it remains to count the number of possible
connections between them. Due to transitivity, if a vertex in $V_i$ is connected to a vertex in $V_j$, then
it is also connected to all greater vertices. Now we represent elements of $V_j$ as $d_2$ white balls, and put 
$d_1$ black balls between them. The $i$-th black ball is located in such a place that the $i$-th vertex of $V_i$
is connected to all vertices of $V_j$ that are greater than the position of the $i$-th ball, and only to them.
Due  to transitivity,
the order of black balls is the same as $Q_i$. The number of possible distributions of white and black balls is
${{d_1+d_2} \choose d_1} < 2^{d_1+d_2} \le 2^{2d}$. Doing the same for all 6 pairs of 3 chains, we get
the estimation $(2^{2d})^6 = (2^d)^{12} = \poly(2^d)$ for the number of $(i,j,k)$-profiles.
\end{proof}

Now we define the context free grammar $\Gf_1$. Non-terminal symbols of this grammar include the starting symbol $S$ and 
symbols $F_{i,j,k,R}$ for any triple $(i,j,k)$ ($1 \le i \le j \le k \le n$) and any $(i,j,k)$-profile $R$. The meaning of these non-terminals is in the following statement, which will be proved by induction
after we present the rules of $\Gf_1$: a word $w$ is derivable from $F_{i,j,k,R}$ if{f} $w = c(\tilde{\Ec})$ for a correct
$(i,j,k)$-segment $\tilde{\Ec}$ with profile $R$; a word $w$ is derivable from $S$ if{f} $w = c(\Ec)$ for some proof net $\Ec$.
(Codes of $(i,j,k)$-segments are defined in the same way as codes of proof structures, as involutive permutations of $e_i$, \dots, $e_{k-1}$.)

For the induction base case, $i = j = k$, we take only those values of $i$ such that $c_i$ is a $\Par$ or $\metapar$ occurrence,
and denote by $Q_i$ the $\prec$ relation restricted to $V_i$  (this is the trivial profile of an empty $(i,i,i)$-segment);
$Q_i$ is always acyclic, and an isolated $\Par$ or $\metapar$ occurrence $c_i$ is always a correct $(i,i,i)$-segment (with an empty
$\tilde{\Ec}$), and $c_i$ is its open par.
Now for each $\Par$ or $\metapar$ occurrence $c_i$ we add the following rule to the grammar (this is a $\varepsilon$-rule, the right-hand side is empty):
$$
F_{i,i,i,Q_i} \Rightarrow.
$$

Next, consider the non-trivial situation, where $i < k$.  The difference $k-i$ should be even, otherwise there
couldn't exist a literal pairing $\tilde{\Ec}$. Moreover, if both $c_i$ and $c_k$ are $\Par$ or $\metapar$ occurrences, 
 a correct $(i,j,k)$-segment couldn't exist either, since in the outer region we have at least two $\Par$ or
$\metapar$ occurrences, namely, $c_i$ and $c_k$. Therefore, we include rules for $F_{i,j,k,R}$ only if 
$k-i$ is even and at least one of $c_i$ and $c_k$ should be $\Tensor$. Let it be $c_i$ (Pentus' {\em situation of
the first kind}). The $c_k$ case (Pentus' {\em situation of the second kind}) is handled symmetrically.

We take the leftmost literal occurrence, $\ell_i$, and find all possible occurrences among $\ell_{i+1}$, \ldots, $\ell_{k-1}$ that
could be connected to $\ell_i$ (i.e., if $\ell_i$ is an occurrence of $q$, we seek $\bar{q}$, and vice versa). For each such
occurrence, $\ell_{h-1}$, we consider two triples, $(i+1,j',h-1)$ and $(h,j,k)$, and all possible $(i+1,j',h-1)$- and
$(h,j,k)$-profiles, $R_1$ and $R_2$, respectively. For each such pair, $R_1$ and $R_2$, we consider the transitive closure
of the following relation:
$
R_1 \cup R_2 \cup Q_i \cup \{ \langle \tau(c_i), d \rangle \mid d \in V_j \}.
$
If it is irreflexive (acyclic), its restriction to $V_i \cup V_j \cup V_k$, denoted by $R$,
 will become a profile of an $(i,j,k)$-segment. For this, we add the following rule to
the grammar:
$$
F_{i,j,k,R} \Rightarrow e_{h-1} \: F_{i+1,j',h-1,R_1} \: e_i \: F_{h,j,k,R_2}.
$$

\begin{lemma}
In this grammar, a word $w$ can be derived from $F_{i,j,k,R}$ if{f} $w = c(\tilde{\Ec})$ for some
$(i,j,k)$-segment $\Ec$ with profile $R$.
\end{lemma}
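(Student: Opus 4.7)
The plan is to prove both directions simultaneously by induction on the length $k-i$ of the segment (equivalently, on the length of the word $w$). The base case is $i=j=k$: the only rule with this head is $F_{i,i,i,Q_i} \Rightarrow \varepsilon$, and $\varepsilon$ is precisely the code of the unique empty $(i,i,i)$-segment, whose profile is the restriction of $\prec$ to $V_i$, namely $Q_i$. For the inductive step ($i<k$), I treat the situation of the first kind, where $c_i$ is a $\Tensor$ occurrence; the situation of the second kind is symmetric.

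For the ``only if'' direction, suppose the first step of a derivation is $F_{i,j,k,R} \Rightarrow e_{h-1}\,F_{i+1,j',h-1,R_1}\,e_i\,F_{h,j,k,R_2}$, followed by derivations of words $w_1,w_2$ from the two sub-non-terminals. By the induction hypothesis, $w_1 = c(\tilde{\Ec}_1)$ and $w_2 = c(\tilde{\Ec}_2)$ for correct sub-segments with the corresponding profiles. I would form $\tilde{\Ec} = \tilde{\Ec}_1 \cup \tilde{\Ec}_2 \cup \{\langle \ell_i,\ell_{h-1}\rangle\}$ and verify: planarity (the new edge encloses $\tilde{\Ec}_1$ cleanly, while $\tilde{\Ec}_2$ sits to its right); the region condition (the inner region's open par is $c_{j'}$, the outer open par remains $c_j$); and the correct $\tilde{\Ac}$-edges, noting that $c_i$ is the only new $\Tensor$ and by construction it lies in the outer region, so $\tilde{\Ac}(c_i) = c_j$. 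Correctness then reduces, via the lemma on $\precsim$-cycles, to acyclicity of the transitive closure of $\tilde{\Ac}_1 \cup \tilde{\Ac}_2 \cup {\precsim} \cup \{\langle c_i,c_j\rangle\}$, which is exactly what the rule's acyclicity check on $R_1 \cup R_2 \cup Q_i \cup \{\langle \tau(c_i),d\rangle : d \in V_j\}$ certifies; restricting the resulting closure to $V_i \cup V_j \cup V_k$ recovers $R$.

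For the ``if'' direction, given a correct $(i,j,k)$-segment $\tilde{\Ec}$ with profile $R$ and $c_i$ a $\Tensor$, I take the edge from $\ell_i$ in $\tilde{\Ec}$; its other endpoint is some literal occurrence $\ell_{h-1}$ of opposite polarity. Planarity of $\tilde{\Ec}$ forces the literals $\ell_{i+1},\ldots,\ell_{h-2}$ to be paired among themselves by a sub-structure $\tilde{\Ec}_1$, and similarly $\ell_h,\ldots,\ell_{k-1}$ by $\tilde{\Ec}_2$. I would check that $\tilde{\Ec}_1$ is an $(i+1,j',h-1)$-segment, where $j'$ is chosen as the unique $\Par$ or $\metapar$ occurrence among $c_{i+1},\ldots,c_{h-1}$ lying on the outer boundary of the inner region, and $\tilde{\Ec}_2$ is an $(h,j,k)$-segment with the same outer open par $c_j$. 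Correctness of each sub-segment is inherited from $\tilde{\Ec}$, since a cycle in a sub-$\tilde{\Ac} \cup {\precsim}$ would lift to a cycle in the whole. Letting $R_1,R_2$ be their profiles, the corresponding rule is in $\Gf_1$ precisely because the required acyclicity check succeeds (again by the lemma, using correctness of $\tilde{\Ec}$), and the induction hypothesis yields derivations of $c(\tilde{\Ec}_1)$ and $c(\tilde{\Ec}_2)$ from the sub-non-terminals, assembling to a derivation of $c(\tilde{\Ec})$.

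The principal obstacle, and the real content beyond bookkeeping, is to justify the ``profile matches'' clauses in both directions: one must show that the transitive closure of the full $\tilde{\Ac} \cup {\precsim}$ restricted to $V_i \cup V_j \cup V_k$ is computable from the restricted closures on $V_{i+1} \cup V_{j'} \cup V_{h-1}$ and $V_h \cup V_j \cup V_k$ together with $Q_i$ and the single cross-link $\tau(c_i) \to V_j$. This relies on the fact that every $\precsim$-path between occurrences in $V_i \cup V_j \cup V_k$ either stays entirely within one sub-segment's closure or factors through $c_i$'s $\Ac$-link to the outer open par, plus the observation that the ``dominant'' representatives $\tau(\cdot)$ faithfully encode the information needed because every non-representative $\Tensor$ is $\approx$-equivalent to its representative.
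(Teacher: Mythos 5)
Your proposal is correct and follows essentially the same route as the paper: induction on $k-i$, splitting the segment at the $\tilde{\Ec}$ link from $\ell_i$ (resp.\ $\ell_{k-1}$) in the ``if'' direction and recombining sub-segments in the ``only if'' direction, with the crux being that any $\tilde{\Ac}\cup{\precsim}$ path between the two sub-segments must cross at dominant $\Tensor$ occurrences in the sets $V_{i+1}, V_{h-1}, V_h, V_k$ or go through the new $\langle c_i, c_j\rangle$ link, so the profile of the whole is the restriction of the transitive closure of $R_1 \cup R_2 \cup Q_i \cup \{\langle\tau(c_i),d\rangle \mid d\in V_j\}$. The paper phrases this crux as a decomposition of any cycle (or path) into parts of three kinds glued at border-crossing points, which is exactly the justification you identify as the remaining content.
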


\begin{proof}
Proceed by induction on $k-i$. The base case ($i=j=k$) was considered above.

For {\bf the ``only if'' part,} let $w$ be derived by a rule for the first kind (the second kind is symmetric).
Then $w = e_{h-1} w_1 e_i w_2$, and by induction hypothesis $w_1$ and $w_2$ encode $(i+1,j',h-1)$- and
$(h,j,k)$-segments with profiles $R_1$ and $R_2$ respectively. The word $w$ encodes an $(i,j,k)$-segment,
and it remains to show that this segment is correct and its profile is $R$. For this new segment,
$\tilde{\Ac} = \tilde{\Ac}_1 \cup \tilde{\Ac}_2 \cup \langle c_i, c_j \rangle$. Suppose there is a non-trivial cycle
in $\tilde{\Ac} \cup {\precsim}$. Since all cycles in $\tilde{\Ac}_1 \cup {\precsim}$ and $\tilde{\Ac}_2 \cup {\precsim}$ are trivial,
this cycle should either include links from both $\tilde{\Ac}_1$ and $\tilde{\Ac}_2$ or use
the new $\langle c_i, c_j \rangle$ connection (or both). The cycle, however, cannot cross
$\tilde{\Ec}$ links, therefore the only way of ``legally crossing the border'' between segments is by going through
$\Tensor$ occurrences that dominate $c_i$, $c_{i+1}$, $c_{h-1}$, $c_h$, or $c_k$. 
We can assume that these
``border crossing points'' are dominant $\Tensor$ occurrences (otherwise we can add a $\approx$-detour to the cycle).
Then the cycle is actually a concatenation of parts of the following three kinds: (1) connecting vertices of $V_{i+1} \cup V_{h-1}$;
(2) connecting vertices of $V_h \cup V_k$; (3) connecting $\tau(c_i)$, via $c_j$, to a vertex $d$ of $V_j$.
In this case, our cycle induces a cycle in $R_1 \cup R_2 \cup Q_i \cup \{ \langle \tau(c_i), d \rangle \mid d \in V_j\}$, which
is impossible by definition.

It remains to show that $R$ is the profile of the newly constructed segment. Indeed, $R$ is a binary relation on $V_i \cup V_j \cup V_k$
and is included in the transitive closure of $\tilde{\Ac} \cup {\precsim}$, therefore $R$ is a subrelation of the profile.
On the other hand, if there is a pair $\langle c,d \rangle$ in the profile, then there is a path from $c$ to $d$ and, as shown above,
it can be split into parts of kinds (1), (2), and (3). Thus,  $\langle c,d \rangle \in R$, and therefore $R$ coincides with the profile.

For {\bf the ``if'' part,} if $c_i$ in an $(i,j,k)$-segment is a $\Tensor$ occurrence, consider the $\tilde{\Ec}$ link from
the literal occurrence $\ell_i$. It splits the segment into two ones. For each of them, by induction hypothesis, we generate their codes from
$F_{i+1,j',h-1,R_1}$ and $F_{h,j,k,R_2}$ respectively, and then apply the rule to generate the code of the original segment.
Situations of the second kind, where $c_k$ is a $\Tensor$ occurrence, are handled symmetrically.
\end{proof}

Finally, we add rules for the starting symbol. These rules are analogous to the rules for situations of the second kind.
Take $\ell_n$ and find all possible occurrences among $\ell_1$, \dots, $\ell_{n-1}$ that
could be connected to it. For each such occurrence $\ell_h$ and any pair of $(0,0,h)$- and $(h+1,j',n)$-profiles, $R_1$ and $R_2$, 
 respectively (in the first segment $j = 0$, since in the whole proof net the
open par should be the leftmost occurrence of $\metapar$), consider the transitive closure of
$R_1 \cup R_2$. If it is irreflexive, then we add the following rule to the grammar:
$$
S \Rightarrow F_{0,0,h,R_1} \: e_n \: F_{h+1,j',n,R_2} \: e_h.
$$

\begin{lemma}
A word $w$ can be derived from $S$ if{f} $w = c(\tilde{\Ec})$ for some proof net $\Ec$.
\end{lemma}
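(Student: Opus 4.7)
The argument mirrors the inductive structure already used for the non-terminals $F_{i,j,k,R}$, with the extra twist that an $S$-rule implements the outermost axiom link $\ell_h \leftrightarrow \ell_n$. My plan is, in both directions, to identify this top-level link and to decompose (or re-compose) the proof net along it, using the previous lemma as a black box on the two resulting segments. The irreflexivity condition baked into the $S$-rule will play exactly the role that the analogous condition plays in the $F_{i,j,k,R}$-rules.

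For the ``only if'' direction, suppose $w$ is derived from $S$ via a rule
$$S \Rightarrow F_{0,0,h,R_1} \: e_n \: F_{h+1,j',n,R_2} \: e_h,$$
so $w = w_1 \, e_n \, w_2 \, e_h$ with $w_i$ derivable from the $F$-symbols. By the previous lemma, $w_1 = c(\tilde{\Ec}_1)$ and $w_2 = c(\tilde{\Ec}_2)$ for correct segments with profiles $R_1$ and $R_2$. I form $\Ec = \tilde{\Ec}_1 \cup \tilde{\Ec}_2 \cup \{\langle\ell_h,\ell_n\rangle\}$. Planarity is preserved because the new link encloses $\tilde{\Ec}_2$ and leaves $\tilde{\Ec}_1$ untouched; I will verify the region condition by observing that the new link creates one extra bounded region, whose unique $\Par$/$\metapar$ is exactly the open par $c_{j'}$ of $\tilde{\Ec}_2$, while the open par of $\tilde{\Ec}_1$ now inhabits the outer region of $\Ec$. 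For acyclicity of $\Ac \cup {\prec}$, the $\Ac$ graph of $\Ec$ coincides with $\tilde{\Ac}_1 \cup \tilde{\Ac}_2$ (the outer-region routing of each segment remains consistent after the join); the irreflexivity of the transitive closure of $R_1 \cup R_2$ that is imposed by the $S$-rule then rules out cycles by the same argument used in the previous lemma.

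For the ``if'' direction, let $\Ec$ be a proof net with $c(\Ec) = w$, and let $\ell_h$ be the literal linked to $\ell_n$. Removing this link splits $\Ec$ into two planar sub-pairings $\Ec_1$ and $\Ec_2$, covering the literals to the left of $\ell_h$ and between $\ell_h$ and $\ell_n$ respectively. From the region structure of $\Ec$, each of these is a correct segment: $\Ec_1$ is a $(0,0,h)$-segment whose open par is the unique $\Par$/$\metapar$ in what was the outer region of $\Ec$, and $\Ec_2$ is an $(h+1,j',n)$-segment whose open par $c_{j'}$ is the unique $\Par$/$\metapar$ in the region of $\Ec$ that sat just under the removed link. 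Their profiles $R_1, R_2$ inherit acyclicity from $\Ec$, so the transitive closure of $R_1 \cup R_2$ is irreflexive. Applying the previous lemma to generate $c(\Ec_1)$ from $F_{0,0,h,R_1}$ and $c(\Ec_2)$ from $F_{h+1,j',n,R_2}$, and then firing the appropriate $S$-rule, produces a derivation of $w$.

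The one technical step that requires genuine care is confirming that the merged graph $\Ac \cup {\precsim}$ has no nontrivial cycle beyond those witnessed by the transitive closure of $R_1 \cup R_2$ on $V_h \cup V_{h+1} \cup V_{j'} \cup V_n$. This is precisely the border-crossing analysis that was already carried out for the $F_{i,j,k,R}$-rules: a simple nontrivial cycle would have to alternate between the two half-graphs through dominant $\Tensor$ occurrences lying in the relevant $V$-sets, and each such alternation is recorded in either $R_1$ or $R_2$, so the cycle would descend to an irreflexive cycle in the transitive closure of $R_1 \cup R_2$, a contradiction. I therefore expect no new infrastructure beyond what the previous lemma already supplies.
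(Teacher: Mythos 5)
Your proof is correct and follows exactly the route the paper intends: the paper's own proof of this lemma is just the one-line remark ``analogous to the previous lemma,'' and what you have written is precisely that analogy carried out in detail (splitting on the top-level link $\ell_h\leftrightarrow\ell_n$, invoking the segment lemma on the two pieces, and using irreflexivity of the transitive closure of $R_1\cup R_2$ via the same border-crossing analysis). No substantive differences to report.
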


\begin{proof}
Analogous to the previous lemma.
\end{proof}
 
This lemma shows that we've constructed a grammar that generates $P_1$. Now to finish the proof of Theorem~\ref{Th:PentusGrammar}
it remains to establish complexity bounds. 
The number of non-terminal symbols is bounded by $n^3 \cdot K + 1$, where $K$ is the maximal number of $(i,j,k)$-profiles.
Since each rule has length at most 5 (1 non-terminal on the left and 4 symbols on the right), $|\Gf_1|$ is bounded by $5 (n^3 \cdot K + 1)$,
and, since $K$ is $\poly(2^d)$ (Lemma~\ref{Lm:ghost_count}),  $|\Gf_1|$ is $\poly(2^d, n)$. Clearly, the procedure that generates $\Gf_1$ from
the original sequent is also polynomial in running time: acyclicity checks for each rule are performed in $\poly(n)$ time,
and the number of rules is $\poly(2^d,n)$.
 
\section{Conclusions and Future Work}\label{S:future}

In this paper, we've presented an algorithm for provability in the Lambek calculus with brackets.
Our algorithm runs in  polynomial time w.r.t.\ the size of the input sequent, if its order and bracket nesting depth are
bounded. Our new result for bracket modalities is non-trivial, and we address
it with a combination of proof nets and finite automata techniques.

We summarize some questions raised for future research.
First, Pentus~\cite{Pentus2010} also presents a parsing procedure 
for Lambek categorial grammars. In a Lambek grammar, several types can be assigned to one word,
which adds an extra level of non-determinism.
Our intention is to develop an efficient parsing procedure for grammars with brackets. 
Second, the problem whether $\Lb$-grammars define exactly context free languages is still open
(the counter-example by Fadda and Morrill~\cite{FaddaMorrill2005}  jeopardises J\"ager's claim).
Third, in our calculus we allow empty antecedents. We are going to modify our algorithm for the bracketed extension of the original Lambek calculus,
using a modified notion of proof nets (see for example~\cite{LamarcheRetore1996}\cite{Kuznetsov2012}).
A more general question is to extend the algorithm to other enrichments of the Lambek calculus (see, for example,~\cite{Morrill2011}),
keeping polynomiality, if possible. Notice that some of these enrichments are generally undecidable~\cite{KKSarXiv}, so it is
interesting to find feasible bounded fragments. 

\subparagraph*{Acknowledgements.} 
The work of M. Kanovich and A. Scedrov was supported by the Russian Science Foundation under grant 17-11-01294 and performed at National Research University Higher School of Economics, Russia. The work of G. Morrill was supported by an ICREA Academia 2012
and MINECO TIN2014-57226-P (APCOM). The work of S. Kuznetsov was supported by the Russian Foundation for Basic Research
(grant 15-01-09218-a) and by the Presidential Council for Support of Leading Research Schools
(grant N\v{S}-9091.2016.1). Section 1 was contributed by G. Morrill, Section 4 by M. Kanovich and A. Scedrov, Section 5 by S. Kuznetsov, and Sections 2, 3, 6, and 7 were contributed jointly and equally by all coauthors.

The authors are grateful to Mati Pentus for in-depth comments on his algorithm~\cite{Pentus2010}.

\bibliography{Lb-poly}

\end{document}